\def\?[#1]{\textbf{[#1]}\marginpar{\Large{\textbf{??}}}} 
\def\smallsection#1{\smallskip\noindent\textbf{#1}.}
\let\epsilon=\varepsilon 
\newcommand{\CC}{{\mathbb C}}
\newcommand{\diag}{\mathrm{diag}}
\newcommand{\ZZ}{{\mathbb Z}}
\newcommand{\bt}{\mathbf{t}}
\newcommand{\Hn}{H_n ( \alpha)}
\newcommand{\kk}{k}
\newcommand{\bo}{0}
\newcommand{\bi}{i}
\newcommand{\bk}{k}
\newcommand{\magic}{\mathcal{A}}
\newtheorem{theo}{Theorem}
\newtheorem*{theo*}{Theorem}
\newtheorem{prop}{Proposition}[section]
\newtheorem{lemm}[prop]{Lemma}
\newtheorem{corr}[prop]{Corollary}
\newtheorem{rem}[prop]{Remark}
\numberwithin{equation}{section}
\DeclareMathOperator{\Spec}{Spec}
\DeclareMathOperator{\loc}{loc}
\DeclareMathOperator{\rank}{rank}
\DeclareMathOperator{\tr}{tr}
\DeclareMathOperator{\id}{\text{Id}}
\DeclareMathOperator{\coker}{coker}
\newcommand\reallywidehat[1]{\arraycolsep=0pt\relax%
\begin{array}{c}
\stretchto{
  \scaleto{
    \scalerel*[\widthof{\ensuremath{#1}}]{\kern-.5pt\bigwedge\kern-.5pt}
    {\rule[-\textheight/2]{1ex}{\textheight}} 
  }{\textheight} %
}{0.5ex}\\           
#1\\                 
\rule{-1ex}{0ex}
\end{array}
}
\title[Twisted multilayer graphene]{Flat bands, Dirac cones, and higher-order band crossings in twisted multilayer graphene}
\author{Bryan Li}
\email{byli@berkeley.edu}
\address{Department of Mathematics, University of California, Berkeley, CA 94720, USA.}
\author{Mengxuan Yang}
\email{yangmx@princeton.edu}
\address{Department of Operations Research \& Financial Engineering, Princeton University, Princeton, NJ 08544 USA}
\begin{document}

\begin{abstract}
For the chiral limit of two sheets of $n$-layer Bernal-stacked graphene established in the Physical Review Letters \cite{wang2022hierarchy,ledwith2022family}, we prove a trichotomy: depending on the twisting angle, we have either (1) generically, the band crossing of the first two bands is of order $n$; or (2) at a discrete set of magic angles, the first two bands are completely flat; or (3) for another discrete set of twisting parameters, the bands exhibit Dirac cones.  
This new mathematical discovery disproves the common belief in physics that such a twisted multilayer graphene model can only have higher order band crossings or flat bands, and it leads to a new type of topological phase transition. 
\end{abstract}

\maketitle

\section{Introduction}
We consider the chiral limit \cite{tarnopolsky2019origin} of two sheets of $n$-layer Bernal-stacked graphene twisted by a small angle \cite{wang2022hierarchy,ledwith2022family}. The twisted multilayer graphene (TMG) Hamiltonian is given by 
\begin{equation}
  \label{eq:defH}    
  H ( \alpha; \bt) \coloneq \begin{pmatrix} 0 & D_n(\alpha; \mathbf{t})^* \\
  D_n ( \alpha; \mathbf{t}) &  0 \end{pmatrix}    
\end{equation} 
where 
\begin{equation}
  \label{eq:defD}
  D_n ( \alpha; \bt ) \coloneq \left(\begin{matrix} D (\alpha) & t_1T_+ & \\ t_1T_- & D(0) & t_2T_+ & \\ & t_2T_- & D(0) & \ddots \\ && \ddots && \\ &&&& t_{n-1}T_+ \\ &&&t_{n-1}T_- & D(0)\end{matrix}\right),
\end{equation}
with $\bt=(t_1, t_2, \cdots, t_{n-1})$ and  
\begin{equation}
  \label{eq:defD1}
  D( \alpha ) \coloneq \begin{pmatrix} { 2 }  D_{\bar z}  &   \alpha U(z) \\  
    \alpha U(-z) & { 2 } D_{\bar z }   \end{pmatrix}, \ \ 
    T_+ = \left(\begin{matrix}1&0\\0&0\end{matrix}\right), \ \ 
    T_- = \left(\begin{matrix}0&0\\0&1\end{matrix}\right). 
\end{equation}
In particular, $ z = x_1 + i x_2$, $ D_{\bar z } := \tfrac1{2 i } ( \partial_{x_1} + 
i \partial_{x_2} )$ and 
\begin{equation}
\label{eq:defU} 
U (z) = U (z,\bar z ) \coloneq \sum_{ j = 0}^2 \omega^j e^{ \frac12 (  z  
\bar \omega^j - \bar z  \omega^j )} , \ \
\omega \coloneq e^{ 2 \pi i /3 }. 
\end{equation}
In equation \eqref{eq:defD}, $t_kT_+$ (resp.\ $t_kT_-$) denotes the tunneling between the top (resp.\ the bottom) $k$-th layer and $(k+1)$-th layer. Without loss of generality we assume that all $t_i$'s are non-zero, since otherwise we have direct sum decomposition of $D_n (\alpha;\bt)$ into smaller matrix blocks. In this paper, we consider the band structure when $\bt$ is fixed (see \cite{yang2023flat} for band structures of $H(\alpha;\bt)$ when $\bt$ varies).  Hence, we shall drop $\bt$ in the Hamiltonian and simply write $D_n(\alpha)$ or $H(\alpha)$ for notational convenience when there is no confusion. 

\begin{rem}
For $n=1$, the operator \eqref{eq:defH} reduces to the famous chiral limit of twisted bilayer graphene (TBG) model
\begin{equation}
\label{eq:H2}
H(\alpha) = 
\begin{pmatrix} 0 & D(\alpha)^* \\
  D ( \alpha) &  0 
  \end{pmatrix},
\end{equation}
see \cite{tarnopolsky2019origin,becker2020mathematics,becker2022fine} for more details. The dimensionless parameter $\alpha$ is essentially the reciprocal of the twisting angle in physics.
\end{rem}

Note that the potential \eqref{eq:defU} satisfies the following properties:
\begin{equation}
\label{eq:symmU} 
\begin{gathered}  
U ( z + \mathbf a ) = \bar \omega^{a_1 + a_2}  U ( z ) , \ \ 
\mathbf a = \tfrac{4}3 \pi i \omega (  a_1 + \omega a_2 ) ,  \ \ a_j \in \mathbb Z , 
\\ U ( \omega z )  = \omega U ( z )   , \ \  \overline { U ( \bar z ) } = U ( z ) . 
\end{gathered}
\end{equation}
In particular, it is periodic with respect to the lattice $3\Lambda$ with \begin{equation}
\Lambda := \left\{ \tfrac{4}{3} \pi i \omega( a_1 + \omega a_2 ) : a_1,a_2 \in \mathbb Z  \right\}.
\end{equation}
The dual lattice $\Lambda^*$ of $\Lambda$ consists of $ \kk \in\CC $ satisfying
\begin{gather} \langle \gamma,\kk \rangle := \tfrac12 (\gamma \bar{\kk } + \bar \gamma \kk ) \in 2 \pi \ZZ , \ \  \gamma\in \Lambda \Longrightarrow
 \Lambda^* = {\sqrt{3}} \omega \left(
  \ZZ  \oplus  {\omega}  \ZZ \right).
\end{gather}
We consider a generalized Floquet condition under the modified translation operator $\mathscr L_{\mathbf{a}}$, with $\mathbf{a}\in \Lambda$ (cf.~\eqref{eq:translation}), i.e., we study the spectrum of $ \Hn $ satisfying the following boundary conditions:
\begin{equation}
\label{eq:FL_ev}  
H(\alpha) \mathbf u = E(\alpha;\kk) \mathbf u , \ \  \mathscr L_{\mathbf a } \mathbf u ( z ) = e^{ i \langle  \mathbf{a}, \kk 
\rangle}  \mathbf u ( z ) , \ \ \kk \in \mathbb C/\Lambda^* . 
\end{equation}
The spectrum is symmetric with respect to the origin due to the chiral symmetry (cf.~\eqref{eq:defW}) and we index it as
\begin{equation}
\label{eq:eigs} 
\begin{gathered} \{ E_{ j } ( \alpha, \kk ) \}_{ j \in  \mathbb Z^* } ,  \ \  E_{j } ( \alpha; \kk ) = - E_{-j} ( \alpha ; \kk ) ,\ \ \mathbb Z^*:= \ZZ\backslash \{0\} 
\\ 0 \leq E_1 ( \alpha; \kk ) \leq E_2 ( \alpha; \kk ) \leq \cdots , \ \  E_1 ( \alpha; 0 ) =  E_1 ( \alpha; -i ) = 0. 
\end{gathered} \end{equation}
Each $E_j(\cdot;\kk),\, \kk\in \mathbb C/\Lambda^*$ is called a \emph{energy band} of the operator $H(\alpha)$. The points $ 0 , -i \in \mathbb C/\Lambda^* $
are called the {\em Dirac points} in the physics literature. The first two bands $E_{\pm 1}(\alpha;\kk)$ are pinned at zero energy at these two points for all $\alpha\in\CC$ by Proposition \ref{prop:prot} and \ref{prop:prot-1}, and the corresponding eigenstates are called \emph{protected states}.

For $E( \alpha ; \kk )$ satisfies equation \eqref{eq:FL_ev} of $H(\alpha)$, we define the following set of {\em magic angles}:
\begin{equation}
\label{eq:defA}  \mathcal A := \{ \alpha \in \mathbb C : \forall \, \kk \in \mathbb C/\Lambda^* , \ \
E_1 ( \alpha; \kk ) \equiv 0  \}. 
\end{equation}
And $\alpha\in \mathcal{A}$ is called {\em simple} if $E_2 ( \alpha; \kk_0 )>0$ for some $k_0\in \mathbb C/\Lambda^*$ and {\em multiplicity} of $\alpha\in \mathcal{A}$ is defined to be the number of $j$'s ($j>0$) such that $E_j ( \alpha, \kk ) \equiv 0$. It has been shown in \cite{yang2023flat} (using results of \cite{becker2020mathematics}) that $\mathcal{A}$ is a discrete non-empty set independent of $n$. 

In this paper, we study band structures of the TMG Hamiltonian $H(\alpha)$ in \eqref{eq:defH} for all $\alpha\in\CC$ near the symmetry points $K =\bo,-\bi$. We prove the following 
\begin{theo}
\label{thm:main}
    There exists a discrete set $\mathcal{B}\subset \CC$, such that the following trichotomy holds:  
    \begin{enumerate}
        \item \label{part1} for $\alpha\in \CC\backslash (\mathcal{A}\cup\mathcal{B})$ and $\kk$ close to $K = \bo,-\bi$, the first two bands are given by 
        \begin{equation*}
            E_{\pm 1}(\alpha;\kk) = \pm c_n (\alpha) |\kk-K|^{n} + \mathcal{O}(|\kk-K|^{n+1}), \quad c_n>0;
        \end{equation*}
        \item \label{part2} for $\alpha\in \mathcal{A}$, the first two bands are flat, i.e.,
        \begin{equation*}
            E_{\pm 1}(\alpha;\kk) \equiv 0 \quad \text{for all } k\in \mathbb C/\Lambda^* ;
        \end{equation*}
        \item \label{part3} for $\alpha\in \mathcal{B}$, and $\kk$ close to $K = \bo,-\bi$, the first four bands are given by
        \begin{gather*}
            E_{\pm 1}(\alpha;\kk) = \pm \Tilde{c}_{n-1} (\alpha) |\kk-K|^{n-1} + \mathcal{O}(|\kk-K|^{n}), \quad \Tilde{c}_{n-1}>0; \\
            E_{\pm 2}(\alpha;\kk) = \pm \Tilde{c}_1 (\alpha) |\kk-K| + \mathcal{O}(|\kk-K|^{2}), \quad \Tilde{c}_1>0.
        \end{gather*}
    \end{enumerate}
\end{theo}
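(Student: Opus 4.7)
I would reduce the eigenvalue problem \eqref{eq:FL_ev} near each symmetry point $K\in\{0,-i\}$ to a finite-dimensional Grushin (Schur-complement) problem, and read off the trichotomy from the order of vanishing of the resulting effective determinant as a joint analytic function of $\alpha$ and $\delta \kk := \kk - K$. First I would conjugate away the $\kk$-dependent Floquet condition by a modified translation, turning the problem into $(H(\alpha)+V(\delta \kk))\mathbf u = E\mathbf u$ on a fixed Hilbert space with $V(\delta \kk)$ a bounded operator linear in $\delta \kk$. The protected states at $K$ provided by Propositions \ref{prop:prot} and \ref{prop:prot-1} give a canonical $2$-dimensional subspace $\ker H(\alpha)|_{\kk=K}$ that serves as the Grushin auxiliary space when $\alpha\notin\mathcal A$.

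\textbf{Case \ref{part1}: generic order-$n$ crossing.} Using the chain structure of $D_n(\alpha)$, the two protected zero modes are built layer-by-layer starting from a protected mode of $D(\alpha)$ and propagating through the $t_jT_\pm$ off-diagonals. The Grushin step produces a $2\times 2$ effective matrix $E_{+-}(\delta \kk, E, \alpha)$ whose determinant, by chiral symmetry, takes the form $E^2 - |F(\delta \kk,\alpha)|^2 + \mathcal O(E^3)$. The decisive computation is that to couple the two protected states via $V(\delta \kk)$ one must invert $D_n(\alpha)$ on the orthogonal complement of the protected subspace, and the chain structure forces this inverse to contribute $n$ factors of $\delta \kk$, yielding
\begin{equation*}
F(\delta \kk,\alpha) = c_n(\alpha)(\delta \kk)^n + \mathcal O(|\delta \kk|^{n+1}),
\end{equation*}
for an explicit analytic function $c_n(\alpha)$. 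Whenever $c_n(\alpha)\neq 0$ this gives exactly $E_{\pm 1} = \pm |c_n(\alpha)|\,|\delta\kk|^n + \mathcal O(|\delta\kk|^{n+1})$.

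\textbf{Cases \ref{part2} and \ref{part3}.} Case \ref{part2} is immediate from the definition of $\mathcal A$. For case \ref{part3}, define $\mathcal B := \{\alpha\in\CC\setminus \mathcal A : c_n(\alpha)=0\}$; the function $c_n$ is a nontrivial holomorphic function of $\alpha$ (verified by explicit evaluation at $\alpha=0$, where the chain decouples into free Dirac operators), so $\mathcal B$ is discrete. At $\alpha\in\mathcal B$ an additional pair of bands must descend to zero at $K$: a winding/continuity argument for the effective determinant shows that $\dim\ker H(\alpha)|_K$ jumps from $2$ to $4$ precisely on $\mathcal B$. Repeating the Grushin construction with a $4\times 4$ auxiliary space and writing the chiral-symmetric effective matrix as $\begin{pmatrix} 0 & M \\ M^* & 0 \end{pmatrix}$ with $M = M(\delta \kk,\alpha)$ a $2\times 2$ analytic matrix, the singular values $\sigma_1\geq\sigma_2$ of $M$ give $E_{\pm 1}=\pm\sigma_1$ and $E_{\pm 2}=\pm\sigma_2$. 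The product $\sigma_1\sigma_2 = |\det M|$ still has order $|\delta \kk|^n$ (the total topological winding is preserved), while the trace $\sigma_1^2+\sigma_2^2 = \tr(M^*M)$ is of order $|\delta \kk|^{2(n-1)}$ once $c_n(\alpha)=0$. Solving these two algebraic relations forces $\sigma_1\sim|\delta \kk|^{n-1}$ and $\sigma_2\sim|\delta \kk|$, which is exactly the asserted behavior.

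\textbf{Main obstacle.} The hardest step is the explicit leading-order analysis in case \ref{part3}: showing that the coefficients $\tilde c_{n-1}(\alpha)$ and $\tilde c_1(\alpha)$ are indeed nonzero on $\mathcal B$. This requires pushing the Schur expansion to the relevant order along a direction transversal to $\mathcal B$ and exploiting the simplicity of the zeros of $c_n$ on $\mathcal B$, which itself has to be established. I expect to handle this by tracking the layer-by-layer propagation of the enlarged protected subspace and invoking chiral symmetry to factorize the $4\times 4$ effective determinant cleanly.
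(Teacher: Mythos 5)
Your reduction for part \eqref{part1} is essentially the paper's: a Grushin problem built on the two protected states, with the Jordan chain forcing the first $n-1$ off-diagonal terms of the effective matrix to vanish. But two steps that you treat as routine are the actual substance of the argument, and as stated they have gaps. First, you define $\mathcal B$ as the zero set of the leading coefficient $c_n(\alpha)$ and then assert, via a ``winding/continuity argument,'' that $\dim\ker H(\alpha)|_{K}$ jumps from $2$ to $4$ exactly there. Continuity gives you nothing here: a priori $c_n(\alpha)=0$ could simply mean the touching is of order $n+1$ with the kernel still $2$-dimensional. The paper closes this loop structurally (Sections \ref{sec:jordan}--\ref{sec:tangential}): the spectral projector onto the generalized kernel of $D_n(\alpha)\rvert_{L^2_0}$ has constant rank $n$ by analyticity, and the dichotomy of Lemma \ref{lem:dichotomy} (using $\Pi v=\sum a_{ij}\langle v,\varphi_i^*\rangle\varphi_j$ together with $\langle\varphi_k^*,\varphi_\ell\rangle=0$ for $k+\ell\le n$) shows that $\langle u_n,v_1\rangle\neq0$ whenever $0$ is isolated in $\Spec D_n(\alpha)$ and the length-$n$ chain exists. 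So the coefficient cannot vanish while the Jordan block survives; $\mathcal B$ has to be defined as the locus where the chain itself degenerates (equivalently, where $D_n(\alpha)\colon\prescript{\sharp}{}{L^2_{0,0}}\to\prescript{\flat}{}{L^2_{0,-1}}$ fails to be invertible, Lemma \ref{lem:invert}), and its discreteness comes from the spectrum of a compact operator, not from the zeros of $c_n$.

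Second, your case \eqref{part3} argument does not go through as written. With $\sigma_1\ge\sigma_2$ one has $E_1=\sigma_2\sim|\delta\kk|^{n-1}$ and $E_2=\sigma_1\sim|\delta\kk|$ (your indexing is reversed), so $\tr(M^*M)=\sigma_1^2+\sigma_2^2$ is of order $|\delta\kk|^{2}$ for $n\ge3$, not $|\delta\kk|^{2(n-1)}$; and the nonvanishing of the $|\delta\kk|^{n}$ coefficient of $\det M$ is precisely the kind of statement that needs proof --- ``the total topological winding is preserved'' is not one. Your fallback, expanding transversally to $\mathcal B$ and invoking simplicity of the zeros of $c_n$, introduces an unproven hypothesis that the paper never needs. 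The paper instead chooses the extra kernel vectors $u',v'$ orthogonal to the chain endpoints $v_{n-1},u_{n-1}$ (a Gram--Schmidt step legitimized by $\langle u_1,v_{n-1}\rangle\neq0$, Lemma \ref{lem:dichotomy-2}), which decouples the four-dimensional problem into two independent $2\times2$ Grushin problems; the leading coefficients $|\langle u_{n-1},v_1\rangle|$ and $|\langle u',v'\rangle|$ are then each shown to be nonzero by the same spectral-projector argument as in part \eqref{part1}. I recommend replacing the singular-value bookkeeping with this decoupling.
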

    \begin{figure}[h]
        \centering
        \includegraphics[width=0.45\linewidth]{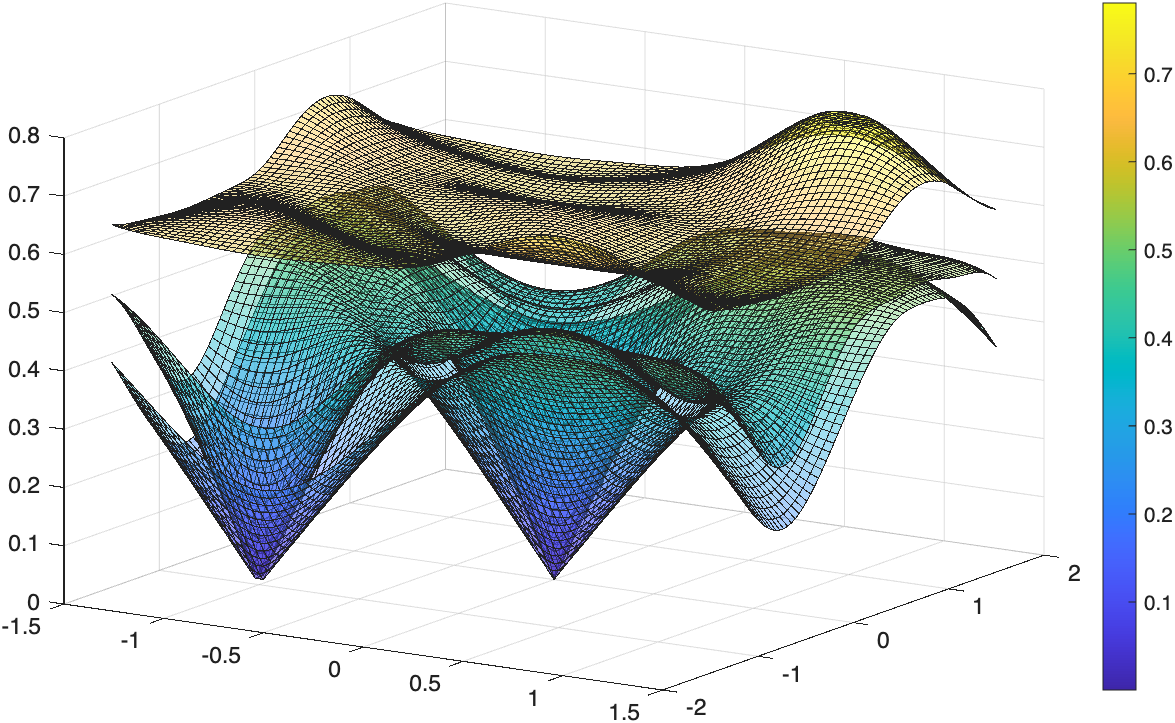}
        \quad
        \includegraphics[width=0.45\linewidth]{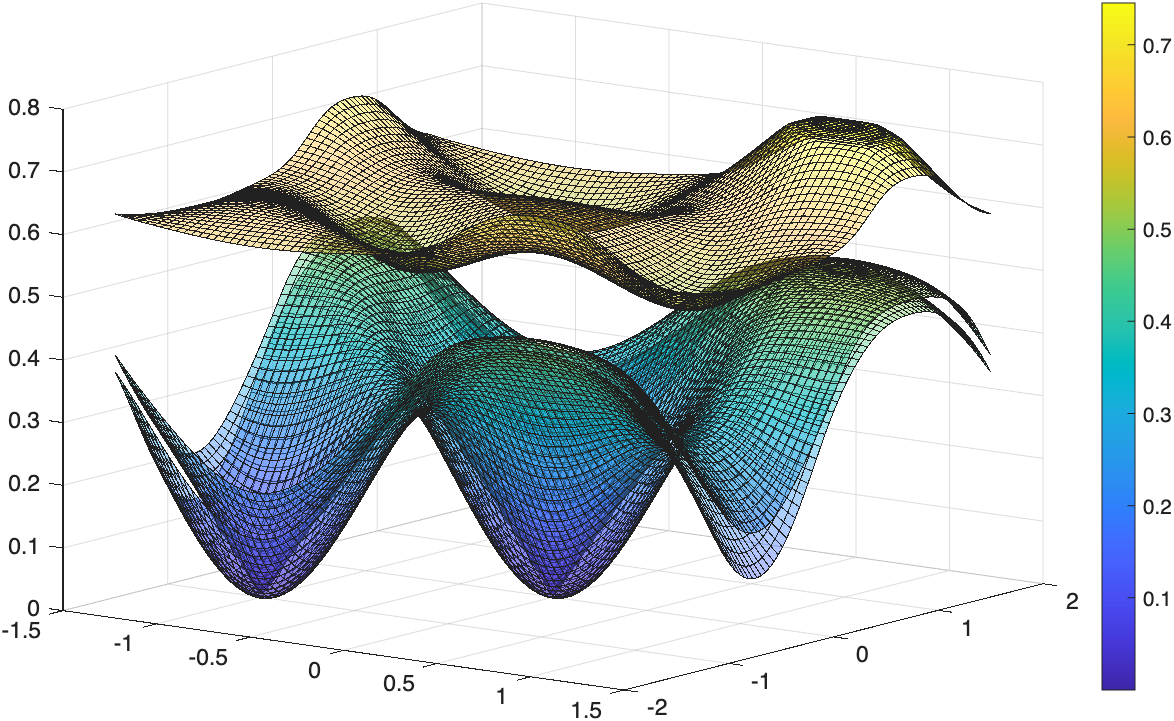}
        \caption{Band structures for $\alpha\in \mathcal{B}$. Only bands with non-negative energy are plotted here, as the bands $E_{\pm j}(k)$ are symmetric with respect to zero energy (cf.~Remark~\ref{rem:sym}). Left: when $n=2$, there are two Dirac cones at $K = 0, -i$ respectively, corresponding to the two bands $E_1(k), E_2(k)$. Right: when $n\geq 3$, there is a Dirac cone in $E_2(k)$ and an order $n-1$ tangential band in $E_1(k)$ at $K = 0, -i$ respectively.}
        \label{fig:cones}
    \end{figure}
    \begin{figure}[h]
        \centering
        \includegraphics[width=0.48\linewidth]{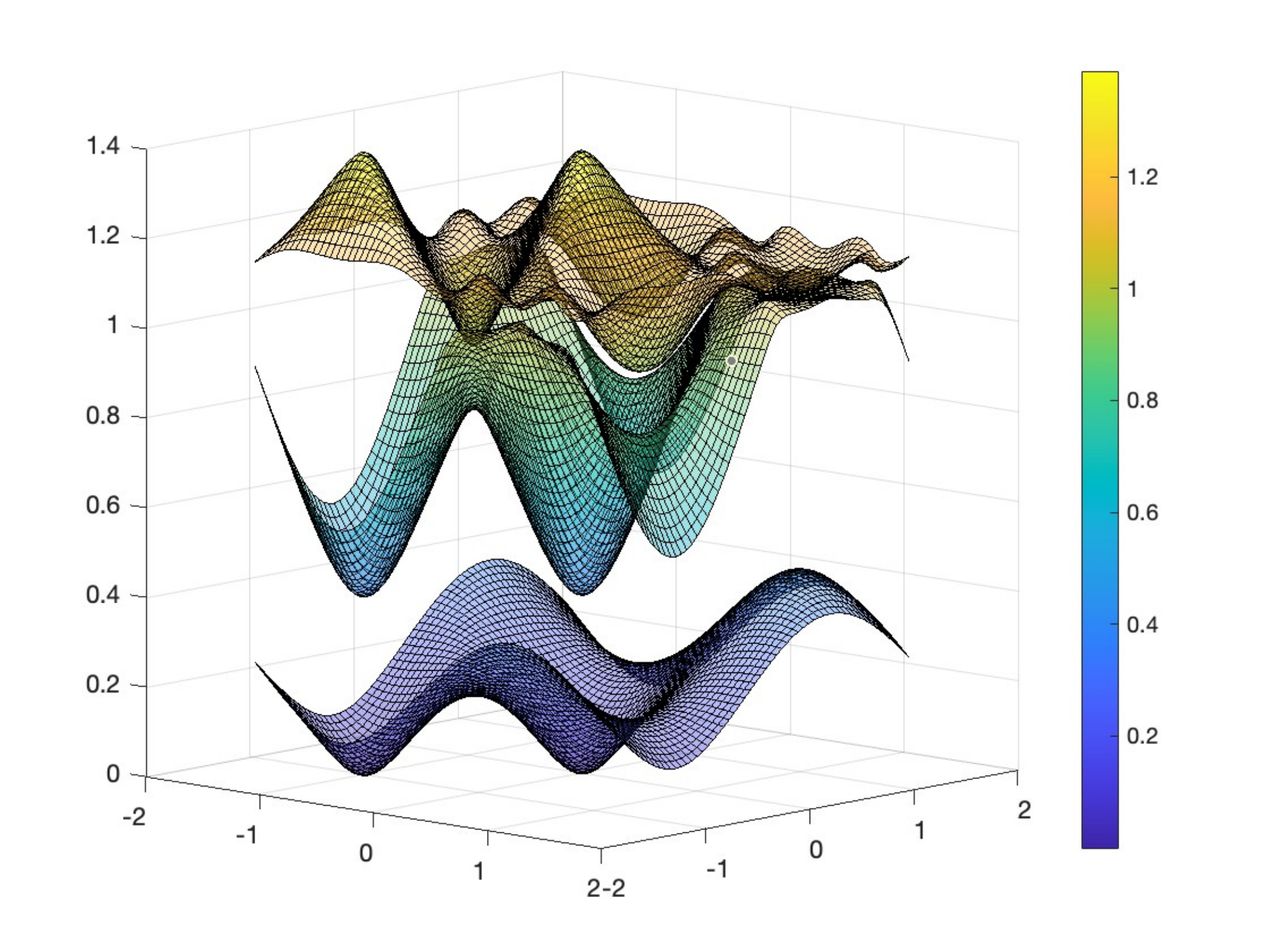}
        \quad
        \includegraphics[width=0.48\linewidth]{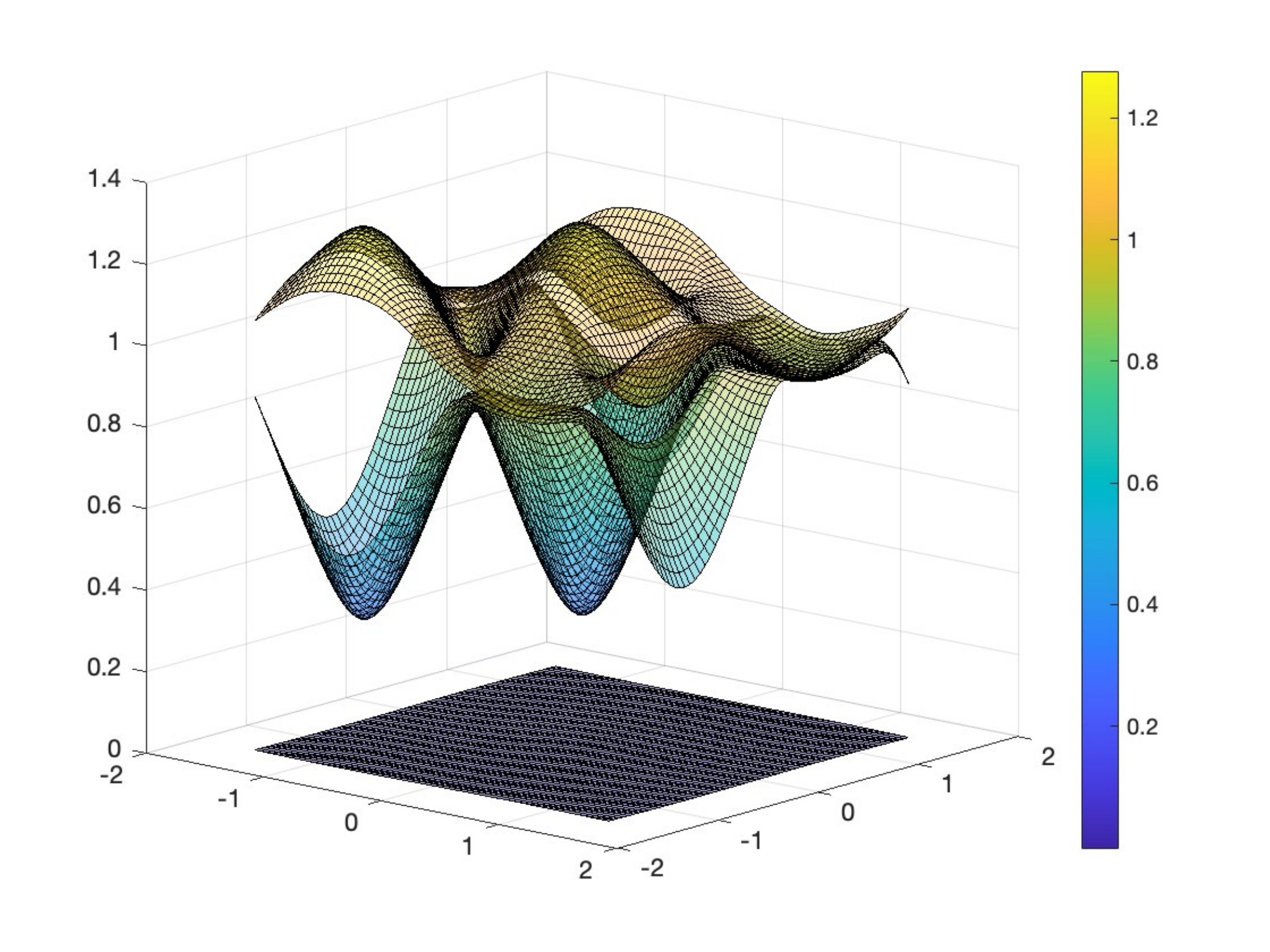}
        \caption{Three (non-negative) energy bands $E_j(k),\, j=1,2,3$ of $H(\alpha)$. Left: tangential band crossing $E_1(k)$ of order $n$ for $\alpha\notin \mathcal{A}\cup\mathcal{B}$. Right: flat band $E_1(k)\equiv 0$ for $\alpha\in \mathcal{A}$.}
        \label{fig:tangential}
    \end{figure}
This type of trichotomy (with $n=2$) was first observed in a scalar
model of flat bands by Dyatlov--Zeng--Zworski \cite{dyatlov2025scalar}, where in addition it was related to the structure of a rank-$2$ holomorphic vector bundle constructed from local solutions to $D(\alpha) u = 0$. 

Part \eqref{part1} of Theorem \ref{thm:main} shows that the generic band crossing in the TMG Hamiltonian $H(\alpha)$ is tangential rather than conic, where the band crossing is exactly of order $n$ (cf.~Figure~\ref{fig:tangential}). This is in contrast with the generic existence of Dirac cones in a single layer of graphene \cite{fefferman2012honeycomb,berkolaiko2018symmetry} and in twisted bilayer graphene \cite{zworski2023survey,becker2024dirac}. Part \eqref{part2} of Theorem \ref{thm:main} is proved in \cite[Theorem~1]{yang2023flat}; we include it here for completeness. The existence of Dirac cones (cf.~Figure~\ref{fig:cones} and part \eqref{part3}) in the TMG model \eqref{eq:defH} is new, and to the best of our knowledge, was not previously known in either theoretical or experimental physics. 

\smallsection{A numerical prediction for experiments}
Based on numerical computations of the set $\mathcal{B}$ (see Figure~\ref{fig:new-spec} and Table~\ref{tab:angles}) using equation \eqref{eq:Tk}, the twisting angle at which the band structures exhibit Dirac cones is predicted to be $\theta\simeq 0.44^\circ$, in comparison with the magic angle $\theta\simeq 1.1^\circ$ (corresponding to $\alpha_1 \simeq 0.58566$), where a flat band appears.

\begin{table}[h]
    \centering
    \begin{tabular}{rclcc}
\multicolumn{1}{c}{$k$} & &
\multicolumn{1}{c}{$\alpha_k$} &
 & $\beta_{k}$ \\[2pt] \hline
1  &\ &   \phantom{0}0.58566 && 1.45282            \\
2  &&     \phantom{0}2.22118  && 3.35798        \\
3  &&     \phantom{0}3.75140 && 4.64420        \\
4  &&     \phantom{0}5.27649   && 5.69075        \\
5  &&     \phantom{0}6.79478   && 7.50646        \\
6  &&     \phantom{0}8.31299    && 9.39597        \\
\end{tabular}
    \caption{Numerical values of the first six $\alpha\in \mathcal{A}$ and $\beta\in\mathcal{B}$.}
    \label{tab:angles}
\end{table}

\smallsection{Physical background and related works} 
We conclude the introduction by discussing its physical relevance and some other related works in mathematics. 

Two-dimensional \emph{moir\'e materials} emerge from the superposition of multiple slightly misaligned periodic layers (e.g.~honeycomb lattice for graphene), introducing a larger-scale periodic pattern known as a moiré superlattice. It gives rise to various remarkable physical phenomena, such as strongly correlated electrons \cite{cao2018correlated}, the quantum anomalous Hall effect \cite{PhysRevX.1.021014,ledwith2020fractional,xie2021fractional}, and superconductivity \cite{cao2018unconventional,park2021tunable}. One of the most widely studied and successful models characterizing physical properties of twisted bilayer or multilayer graphene is the Bistritzer--MacDonald (BM) \cite{bistritzer2011moire} Hamiltonian, which accurately predicts magic angles of twisted bilayer graphene (TBG). The chiral limit \eqref{eq:defH} of BM Hamiltonian for TBG was introduced by Tarnopolsky--Kruchkov--Vishwanath \cite{tarnopolsky2019origin}, which gives rise to a completely flat band $E_1(\alpha,k)=0$ for all $k\in\mathbb C$ and for certain twisting parameters $\alpha\in\mathcal{A}$. Such a flat band at zero energy corresponds to an eigenvalue of the chiral BM Hamiltonian with infinite multiplicity. Mathematical studies of moir\'e materials have since become very active and fruitful. 

Mathematical study of the chiral model of TBG was initiated by the works of Becker--Embree--Wittsten--Zworski \cite{becker2021spectral,becker2020mathematics}, who provided a spectral characterization of magic angles and explained the exponential squeezing of bands, and of Watson--Luskin \cite{watson2021existence}, who showed the existence of the first magic angle. A series of works by Becker--Humbert--Zworski \cite{becker2022fine,becker2022integrability,becker2023degenerate} investigates trace formulas, the existence of generalized magic angles, the existence and properties of degenerate magic angles, and topological features of energy bands in TBG. Becker--Oltman--Vogel \cite{becker2023magic,becker2024absence} studies the effects of randomness, showing that small magic angles can be destroyed by random perturbations. The effect of magnetic field have also been investigated: the works of Becker--Zworski \cite{becker2023dirac}, Becker--Kim--Zhu \cite{zhu2024magnetic}, and Becker--Zhu \cite{zhu2025spectral} investigate bands of chiral TBG under various types of magnetic fields.
The works of Hitrik–Zworski and Tao–Zworski \cite{hitrik2025classically} study classically forbidden regions for eigenstates.
The existence of Dirac cones away from magic angles was proved by Tao and  the second author in \cite[Appendix]{zworski2023survey}.
The chiral model of TMG was established by Ledwith--Vishwanath--Khalaf \cite{ledwith2022family} and Wang--Liu \cite{wang2022hierarchy}, where arbitrary Chern numbers were found. 
More recently, Becker--Humbert--Wittsten--Yang \cite{becker2025chiral} studied the band structure and topological features of twisted trilayer graphene, and the second author \cite{yang2023flat} analyzed the magic angles of TMG along with the topology of the associated flat bands.
On the many-body side, the ground states of the chiral TBG have been completely characterized by the works of Becker--Lin--Stubbs \cite{becker2025exact,stubbs2024hartree} and Stubbs--Ragone--MacDonald--Lin \cite{stubbs2025many}.
For an overview of mathematical results on the chiral model, see the survey by Zworski \cite{zworski2023survey}.

For the more general BM Hamiltonian as well as other related models, the mathematical derivation has been established by Watson--Kong--MacDonald--Luskin \cite{watson2023bistritzer}, Quinn--Kong--Luskin--Watson \cite{quinn2025higher} using wave packet methods, and Cancès--Garrigue--Gontier \cite{PhysRevB.107.155403} using the density function theory. The work of Cancès--Meng \cite{cances2023semiclassical} studies the asymptotics of density of states using semiclassical analysis. The work of Becker--Zworski \cite{becker2024chiral} studies the band structure of BM Hamiltonian for small coupling parameters. In joint work with Becker et~al.~\cite{becker2024dirac}, the second author proved the existence of Dirac cones and magic angles in the full BM model. A mathematically more elegant scalar model was proposed by Galkowski--Zworski \cite{galkowski2023abstract}, where abstract conditions on the existence of flat bands are established. The scalar model is being investigated further by Dyatlov--Zeng--Zworski \cite{dyatlov2025scalar}, where they show the asymptotic quantization rule of generalized magic angles (cf.~\cite[Open Problem~1]{zworski2023survey}) by carefully studying the WKB structure of the solution.  

\smallsection{Structure of the paper}
In Section \ref{sec:TML}, we recall some basic facts about the symmetries and symmetry-protected states of the TMG Hamiltonian \eqref{eq:defH}. Section \ref{sec:jordan} establishes the existence of a Jordan block structure (cf.~Theorem \ref{thm:jordan}) away from a discrete set $\mathcal{B}$ of values of $\alpha$. Building on this, Section \ref{sec:tangential} shows that the band crossing is exactly of order $n$ for $\alpha \in \mathbb{C} \backslash (\mathcal{A} \cup \mathcal{B})$. Finally, in Section \ref{sec:cone}, we complete the picture of Theorem~\ref{thm:main} by proving that the band structure exhibits Dirac cones when $\alpha \in \mathcal{B}$.

\smallsection{Acknowledgment}
The authors would like to thank Semyon Dyatlov, Henry Zeng and Maciej Zworski for inspiring discussion regarding the protected Jordan block structures in the scalar model \cite{zworski2025scalar}, as well as for providing numerical help. The authors would also like to thank Yuan Cao, Patrick Ledwith, Zhongkai Tao and Jie Wang for their interest in the project and helpful discussions. MY acknowledges support from the AMS-Simons Travel Grant and NSF grant DMS-2509989, as well as partial support from the DARPA grant AIQ-HR001124S0029. BL is partially supported by the UC Berkeley Summer Undergraduate Research Fellowships (SURF) program.

\section{TMG Hamiltonians}
\label{sec:TML}
In this section, we recall symmetries of the Hamiltonian $H(\alpha)$. The symmetry group commutes with the Hamiltonian and split the functional space into irreducible representations. This gives rise to the so-called protected states.

\subsection{Symmetries of the TMG Hamiltonian}
We briefly recall the symmetries of the Hamiltonian $H(\alpha)$ discussed in \cite{yang2023flat}. Recall that the lattice is given by
\begin{equation}
\label{eq:defGam3}
 \Lambda = \tfrac43 \pi i \omega ( \ZZ \oplus \omega \ZZ)  
\end{equation}
We first define a twisted translation on $L_{\loc}^2 ( \CC ; \CC^{2n} )$
\begin{equation}
\label{eq:translation}
    {\mathscr L}_{\mathbf a } u(z) :=  \diag_n\left\{\begin{pmatrix}  \omega^{a_1+a_2} & 0 \\ 0 & 1 \end{pmatrix}  \right\}  u(z+ \mathbf a) ,  \ \ 
\mathbf a = \tfrac 43 \pi i \omega ( a_1 + \omega a_2 ) \in \Lambda,
\end{equation}
where $\mathrm{diag}_n\{(*)\}$ denotes block diagonal matrices with $n$ blocks. The operator ${\mathscr L}_{\mathbf a }$ satisfies 
\begin{equation}
\label{eq:defLa0} 
\mathscr L_{\mathbf a } D_n ( \alpha )  = D_n ( \alpha )  \mathscr L_{\mathbf a } , \ \ \ 
\end{equation}
We can further extend the action of $ \mathscr L_{\mathbf a } $ to 
$ L_{\loc}^2 ( \CC ; \CC^{4n} ) $ 
block-diagonally and it yields 
\begin{equation}
\mathscr L_{\mathbf a} H ( \alpha ) = H ( \alpha ) \mathscr L_{\mathbf a }, \ \ 
\mathbf a  \in \Lambda.
\end{equation}
Hence, after a conjugation by $e^{i\langle z,k \rangle}$, the spectrum of $H(\alpha)$ under the Floquet condition \eqref{eq:FL_ev} for $\kk \in \mathbb C/\Lambda^*$ is equivalent to
\begin{equation}
\label{eq:FL-Hk}  
H_k(\alpha) \mathbf u = E(\alpha;\kk) \mathbf u , \ \  \mathscr L_{\mathbf a } \mathbf u ( z ) =  \mathbf u ( z ) , \quad \text{for all } \mathbf{a}\in\Lambda, 
\end{equation}
where
\begin{equation}
\label{eq:Hk}
    H_k(\alpha):= 
    \begin{pmatrix} 0 & D(\alpha)^*+\Bar{k} \\
  D ( \alpha) +k &  0 
  \end{pmatrix},
\end{equation}

Now we consider the rotational symmetry. The second identity in \eqref{eq:symmU} implies that
$$  [ D ( \alpha )  u ( \omega \bullet ) ] ( z )  = \bar \omega 
[ D ( \alpha )  u ] ( \omega z ) .$$
Therefore, we define the twisted rotation operator on $L_{\loc}^2 ( \CC ; \CC^{2n} )$ as
\begin{equation}
    \label{eq:c2n}
    \mathscr C_{2n}u (z) := J u(\omega z), \quad  J := \diag(1,1, \bar{\omega}, \omega, \cdots, \bar{\omega}^{n-1}, \omega^{n-1})
\end{equation}
so that
\begin{equation}
     D_n ( \alpha ) \mathscr C_{2n}  = \bar \omega \mathscr C_{2n} D_n ( \alpha ).
\end{equation}
By a slightly abuse of notation, this can be extended to $ L_{\loc}^2 ( \CC ; \CC^{4n} )$ such that 
\begin{gather}
    \label{eq:c4n}
    \mathscr C_{4n} H(\alpha) = H(\alpha) \mathscr C_{4n}, \quad \mathscr C_{4n} u ( z ) := \begin{pmatrix} J & 0  \\ 0 &  \bar \omega J  \end{pmatrix} u ( \omega z ).
\end{gather}

We record an additional action and chiral symmetry involving $H(\alpha)$. 
\begin{equation}
\label{eq:defW}
\begin{gathered}   H(\alpha) = - \mathscr W H(\alpha) \mathscr W^*, \ \ \ \mathscr W := \begin{pmatrix}
\id & 0 \\
0 & -\id \end{pmatrix} ,
\ \ \mathscr W  \mathscr C_{4n}  = 
 \mathscr C_{4n}  \mathscr W ,
 \ \ \mathscr L_{\mathbf a } \mathscr W = \mathscr W \mathscr L_{\mathbf a },
\end{gathered}
\end{equation}
This implies that the spectrum of $H(\alpha)$ is even. 
\begin{rem}
\label{rem:sym}
    For the Bloch transformed Hamiltonian $H_{k}(\alpha)$ in \eqref{eq:Hk}, we also have 
    $$H_{k}(\alpha)  = - \mathscr W H_{k}(\alpha) \mathscr W^*.$$
    This implies that for each eigenvalue $E_j(k)\geq0$ of $H_{k}(\alpha)$, there exists another eigenvalue $E_{-j}(k)\leq0$ such that $E_{-j}(k) = -E_{j}(k)$.
\end{rem}

\subsubsection{Group actions on functional spaces}
Following \cite{becker2020mathematics}, since $ \mathscr C_{4n} \mathscr L_{\mathbf a } = \mathscr L_{\bar \omega \mathbf a } \mathscr C_{4n}$, we can combine the two actions into a unitary group action that
commutes with $H(\alpha)$:
\begin{equation}
\label{eq:defG}  
\begin{gathered}  G := \Lambda \rtimes \ZZ_3 , \ \ 
  \ZZ_3 \ni k : \mathbf a \to \bar \omega^k  \mathbf a , \ \ \ ( \mathbf a , k ) \cdot ( \mathbf a' , \ell ) = 
( \mathbf a + \bar \omega^k \mathbf a' , k + \ell ) ,
\\  ( \mathbf a, \ell ) \cdot u = 
\mathscr L_{\mathbf a } \mathscr C_{4n}^\ell u ,\quad  u \in L^2 ( \CC; \CC^{4n} ) \ \
\end{gathered}
\end{equation}
Taking a quotient by $ 3\Lambda $, we obtain a finite group acting 
unitarily on $ L^2 ( \CC/3\Lambda; \CC^{4n} ) $ and commuting with $ H(\alpha) $:
\begin{equation}
\label{eq:defG3}
G_3 :=  G/3\Lambda = \Lambda/3\Lambda \rtimes \ZZ_3 \simeq \ZZ_3^2 \rtimes \ZZ_3. 
\end{equation}
Restricting to the components, $G_3$ acts on $ L^2 ( \CC/3\Lambda; \CC^{2n} ) $ as well and we 
use the same notation for those actions. For $\bullet = 2n, 4n$, we define the following function spaces using the group action $G_3$:
\begin{equation}
\label{eq:Lkp}
L^2_{\kappa,p} ( \CC; \CC^{\bullet} ):= \left\{{u}\in L^2_{\loc}(\CC; \CC^{\bullet}): \  \mathscr L_{\mathbf a } \mathscr C^{\ell}  u = e^{i\langle \mathbf a, \kk \rangle} \bar \omega^{\ell p}   u\right\}, \quad \ell,p,\kappa\in\ZZ_3, \quad k = -i\kappa.
\end{equation}
The commutativity of $\mathscr{L}_{\mathbf{a}}$ and $\mathscr C_{4n}$ with $H(\alpha)$ and the chiral symmetry $\mathscr W$ yields that
\begin{prop}
  \label{prop:specH}
The operator $H(\alpha): L^2 ( \CC; \CC^{4n} ) \to L^2 ( \CC; \CC^{4n} )$ is
an unbounded self-adjoint operator with the domain given by 
$ H^1 ( \CC; \CC^{4n} ) $ and 
\[ \Spec_{ L^2 ( \CC ) } H(\alpha)  = 
- \Spec_{ L^2 ( \CC ) } H(\alpha). \]
The same conclusions holds when $ L^2 (\CC )$ is replaced by $ L^2_{\kappa,p}( \CC; \CC^{4n} )$, where the spectrum is discrete. 
\end{prop}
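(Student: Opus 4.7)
The plan is to derive all three assertions from standard first-order elliptic theory together with the symmetries collected earlier in Section \ref{sec:TML}. First I would establish self-adjointness on $L^2(\CC;\CC^{4n})$. Each diagonal block $D(\alpha)$ is the Dirac-type operator $\begin{pmatrix} 2D_{\bar z} & \alpha U(z)\\ \alpha U(-z) & 2D_{\bar z}\end{pmatrix}$ with smooth bounded potential $U$, and the tunnelling terms $t_k T_\pm$ are bounded zero-order perturbations, so $D_n(\alpha)$ is a first-order matrix differential operator with principal part $2D_{\bar z}\cdot\id_{2n}$. The off-diagonal arrangement in \eqref{eq:defH} then gives $H(\alpha)$ a symmetric principal symbol of the form $\begin{pmatrix}0 & 2\bar\zeta\\ 2\zeta & 0\end{pmatrix}\otimes\id_{2n}$, invertible for $\zeta\neq 0$. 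Consequently $H(\alpha)$ is elliptic and symmetric on $C_c^\infty(\CC;\CC^{4n})$, and the boundedness of all zero-order coefficients makes the graph norm equivalent to the $H^1$ norm; essential self-adjointness with closure domain $H^1(\CC;\CC^{4n})$ then follows from the classical commutator / Friedrichs-mollifier argument for symmetric elliptic first-order systems on $\RR^2$ with bounded coefficients.

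Next, I would read off the spectral symmetry $\Spec H(\alpha) = -\Spec H(\alpha)$ directly from the chiral relation \eqref{eq:defW}: since $\mathscr W$ is a unitary involution satisfying $\mathscr W H(\alpha)\mathscr W^* = -H(\alpha)$, the resolvent identity $(H(\alpha)-z)^{-1} = -\mathscr W(H(\alpha)+z)^{-1}\mathscr W^*$ immediately yields $z\in\Spec H(\alpha)\iff -z\in\Spec H(\alpha)$.

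For the restriction to $L^2_{\kappa,p}(\CC;\CC^{4n})$, the key observation is that by \eqref{eq:defG}--\eqref{eq:defG3} the finite group $G_3$ acts unitarily and commutes with both $H(\alpha)$ and $\mathscr W$. Hence each character subspace $L^2_{\kappa,p}$ is a closed invariant subspace that is additionally preserved by $\mathscr W$, so both self-adjointness on the intersected domain $H^1_{\loc}\cap L^2_{\kappa,p}$ and the spectral symmetry descend by functional calculus. For discreteness, I would identify $L^2_{\kappa,p}$ unitarily with a closed subspace of $L^2(\CC/3\Lambda;\CC^{4n})$ (equivalently, with $L^2$ sections of a flat Hermitian line/vector bundle over the torus $\CC/3\Lambda$ twisted by the appropriate character of $G_3$). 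On this compact quotient $H(\alpha)$ becomes a smooth elliptic self-adjoint first-order operator, whose resolvent is compact by the Rellich--Kondrachov embedding $H^1(\CC/3\Lambda)\hookrightarrow L^2(\CC/3\Lambda)$, so its spectrum is discrete.

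The step I expect to require the most care is the last one: verifying that intersecting the $G_3$-equivariance condition defining $L^2_{\kappa,p}$ with $H^1_{\loc}$ actually produces a bona fide self-adjoint realization rather than merely a symmetric operator with undetected boundary defects. Conceptually this is routine once one reformulates the Floquet conditions as sections of a twisted bundle over the compact torus $\CC/3\Lambda$, but one must check that the twisted translation and rotation intertwining conditions in \eqref{eq:Lkp} are compatible with the $H^1$ regularity in a way that leaves no free parameters. After that reduction, every remaining assertion is standard elliptic theory on a closed surface.
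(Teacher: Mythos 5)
Your argument is correct and follows the same route the paper (implicitly) takes: the paper simply asserts that Proposition \ref{prop:specH} follows from the commutation relations with $\mathscr L_{\mathbf a}$, $\mathscr C_{4n}$ and the chiral symmetry $\mathscr W$ in \eqref{eq:defW}, together with standard elliptic theory, and you have supplied exactly those standard details — self-adjointness on $H^1$ as a bounded (Kato--Rellich-type) perturbation of a constant-coefficient elliptic symmetric system, spectral symmetry from the unitary involution $\mathscr W$, and discreteness via the identification of $L^2_{\kappa,p}$ with sections over the compact torus $\CC/3\Lambda$ and Rellich compactness. No gaps; the only caveat worth keeping in mind is the one you already flag, namely that $L^2_{\kappa,p}$ lives inside $L^2_{\loc}$ rather than $L^2(\CC)$ and must be realized on the quotient before invoking compact embedding.
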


It is also convenient to consider the space
\begin{gather}
\label{eq:Lk}
    L^2_{k} ( \CC; \CC^{4n} ):= \left\{{u}\in L^2_{\loc}(\CC; \CC^{4n}): \  \mathscr L_{\mathbf a }   u = e^{i\langle \mathbf a, \kk \rangle}   u\right\}, \\ 
    \label{eq:decomp}
    L^2_{k} ( \CC; \CC^{4n} ) = \bigoplus_{p\in\ZZ_3} L^2_{\kappa,p} ( \CC; \CC^{4n} ), \quad k= -i\kappa. 
\end{gather}

\subsection{Protected states at zero energy}
\label{sec:protect}
We recall the result of the existence of protected zero eigenstates of the operator $H(\alpha)$. The kernel of $H(\alpha)$ at $\alpha=0$ is given by 
$$\ker_{L^2 ( \CC/3\Lambda ; \CC^{4n}) } H(0)= \{{e}_1, {e}_{2n}, {e}_{2n+2}, {e}_{4n-1}\},$$ 
where the set $ \{{e}_j\}_{j=1}^{4n} $ forms the standard basis of $ \CC^{4n} $. In particular, we have
\begin{equation}
    \label{eq:ps}
    e_1 \in L^2_{ {1,0}}  ( \CC; \CC^{4n} ) , \ \  
    e_{2n} \in L^2_{ {0,[1-n]}}  ( \CC; \CC^{4n} ) , \ \ 
    e_{2n+2} \in L^2_{ {0,1}}  ( \CC; \CC^{4n} ), \ \  
    e_{4n-1} \in L^2_{ {1,[n]}}  ( \CC; \CC^{4n} ). 
\end{equation}
Since $ \mathscr W $ commutes with the action of $ G_3 $, the spectra of $ H( \alpha)\rvert_{L^2_{k,p}( \CC; \CC^{4n} )}$ are symmetric with respect to $0$ (see Proposition \ref{prop:specH}). As
\begin{equation}
    \dim \ker_{L^2_{\kappa,p}( \CC; \CC^{4n} )} H(0) = 1
\end{equation}
for $L^2_{\kappa,p}( \CC; \CC^{4n} )$ in \eqref{eq:ps}, it follows that such $ H( \alpha) \rvert_{L^2_{k,p}( \CC; \CC^{4n} )} $ has an eigenvalue at $0$ for any $\alpha\in \CC$ as the operator $H(\alpha)$ is holomorphic for $\alpha\in \CC$. Since 
\begin{equation}
\label{eq:kerH-decomp}
    \begin{split}
        \ker_{{L^2_{\kappa,p}}( \CC; \CC^{4n} )} H(\alpha) = 
 \ker_{{L^2_{\kappa,p}( \CC; \CC^{4n} )} }  D_n(\alpha) \oplus \{ 0_{\CC^{2n} } \}
  + \{ 0_{\CC^{2n} } \} \oplus \ker_{{L^2_{\kappa,p}( \CC; \CC^{4n} )} } D_n(\alpha)^*,
    \end{split}
\end{equation} 
we obtained the following result about symmetry protected eigenstates at $0$:
\begin{prop}
\label{prop:prot}
For all $ \alpha \in \CC $, 
\[ \dim \ker_{ L^2_{{ 1,0} }( \CC; \CC^{2n} ) }  D_n(\alpha)  \geq 1 , \quad  
{e}_{2n} \in \ker_{ L^2_{{ 0,[1-n]} }( \CC; \CC^{2n} )}  D_n(\alpha).
\]
\end{prop}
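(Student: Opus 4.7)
The plan is to handle the two assertions by separate techniques: $e_{2n}\in \ker D_n(\alpha)$ will be verified by a direct block-matrix computation that is uniform in $\alpha$, while $\dim \ker_{L^2_{1,0}(\CC;\CC^{2n})} D_n(\alpha)\ge 1$ will be obtained from the constancy of the index of a holomorphic Fredholm family, supplemented by the chiral symmetry $\mathscr W$ where necessary.

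For the inclusion, write $e_{2n}$ as the constant vector equal to $(0,1)$ in the last $\CC^2$-block of $\CC^{2n}$. Reading off \eqref{eq:defD}, only the last diagonal entry $D(0)$ and the super-diagonal coupling $t_{n-1} T_+$ from block $n-1$ act nontrivially on $e_{2n}$. Since $e_{2n}$ is constant, $D(0) e_{2n} = 2 D_{\bar z}(\cdots)=0$; and because $T_+$ projects onto the first coordinate of its block while $e_{2n}$ has zero first component, the super-diagonal term also vanishes. Hence $D_n(\alpha) e_{2n}\equiv 0$ for every $\alpha\in\CC$. For the symmetry class, the lower diagonal entry of the twist matrix in \eqref{eq:translation} is $1$, giving $\mathscr L_{\mathbf a} e_{2n}=e_{2n}$ and thus $\kappa=0$; the last diagonal entry of $J$ in \eqref{eq:c2n} is $\omega^{n-1}$, giving $\mathscr C_{2n} e_{2n}=\omega^{n-1}e_{2n}=\bar\omega^{[1-n]} e_{2n}$ and thus $p=[1-n]$, as claimed.

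For the lower bound I would exploit the commutation $D_n \mathscr C_{2n} = \bar\omega \mathscr C_{2n} D_n$, which forces $D_n$ to decrement the representation label $p$ by one; thus $D_n(\alpha)$ defines a holomorphic family $L^2_{1,0}(\CC;\CC^{2n})\to L^2_{1,-1}(\CC;\CC^{2n})$. Ellipticity of $2 D_{\bar z}$ on the compact quotient $\CC/3\Lambda$ makes each $D_n(\alpha)$ Fredholm, so the index is constant in $\alpha$. At $\alpha=0$ an elementary analysis (parallel to the one that gives \eqref{eq:ps}) shows that the kernel among constants is $\langle e_1,e_{2n}\rangle$ and the cokernel among constants is $\langle e_2,e_{2n-1}\rangle$; only $e_1$ lies in $L^2_{1,0}$, and $e_{2n-1}\in L^2_{1,-1}$ if and only if $n\equiv 0\pmod 3$. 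In the generic case $n\not\equiv 0\pmod 3$, the cokernel at $\alpha=0$ is trivial, the index equals $1$, and the conclusion follows. The main technical hurdle is the exceptional case $n\equiv 0\pmod 3$, where the index is $0$; I would resolve it by observing that $e_{2n-1}$ is itself trivially protected for $D_n(\alpha)^*$, because $D_n(\alpha)^* e_{2n-1}$ involves only the $\alpha$-independent last diagonal block $D(0)^*$ and the coupling $t_{n-1}T_-$ from block $n-1$. Hence $\dim \ker D_n(\alpha)^*|_{L^2_{1,-1}}\ge 1$ for every $\alpha$, and the vanishing index upgrades this to $\dim \ker D_n(\alpha)|_{L^2_{1,0}}=\dim \ker D_n(\alpha)^*|_{L^2_{1,-1}}\ge 1$, finishing the proof.
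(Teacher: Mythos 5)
Your proof is correct, and for the first assertion it takes a genuinely different route from the paper. The paper works with the full chiral Hamiltonian $H(\alpha)$ restricted to the symmetry sector $L^2_{1,0}(\CC;\CC^{4n})$: since $\mathscr W$ commutes with the $G_3$-action, the spectrum there is symmetric about $0$; the kernel at $\alpha=0$ has odd dimension, so an eigenvalue is pinned at $0$ for all $\alpha$, and the splitting \eqref{eq:kerH-decomp} then localizes the protected state to the $D_n(\alpha)$-component. You instead work directly with $D_n(\alpha)\colon L^2_{1,0}(\CC;\CC^{2n})\to L^2_{1,-1}(\CC;\CC^{2n})$ as a holomorphic Fredholm family and use constancy of the index, computing kernel and cokernel at $\alpha=0$. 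The two approaches are of comparable difficulty, but yours has one concrete advantage: it handles cleanly the residue class $n\equiv 0\pmod 3$, where $[n]=0$ so that $e_1$ and $e_{4n-1}$ land in the \emph{same} sector $L^2_{1,0}(\CC;\CC^{4n})$ and the kernel of $H(0)$ there is two-dimensional; the paper's assertion that $\dim\ker_{L^2_{\kappa,p}(\CC;\CC^{4n})}H(0)=1$ for the spaces in \eqref{eq:ps}, and hence the odd-multiplicity protection argument as literally stated, needs adjustment in that case, whereas your index-zero case (vanishing index plus the explicitly protected cokernel element $e_{2n-1}\in\ker D_n(\alpha)^*$) resolves it. Two minor points to tighten: when you say the kernel and cokernel at $\alpha=0$ are spanned "among constants" by $e_1,e_{2n}$ and $e_2,e_{2n-1}$, you should note that these are in fact the \emph{full} kernel and cokernel on the torus (the tridiagonal structure forces each component to be antiholomorphic resp.\ holomorphic up to a constant source term, which must vanish by the mean-zero property of $D_{\bar z}$ on $\CC/3\Lambda$); and the Fredholm property plus local constancy of the index should be justified by noting that $\alpha V_n$ is a relatively compact (indeed $H^1\to L^2$ compact) perturbation of the elliptic operator $D_n(0)$ on the compact quotient. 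Your direct verification that $D_n(\alpha)e_{2n}=0$ and the identification of its sector $L^2_{0,[1-n]}$ are exactly right and are in substance what any proof of the second assertion requires.
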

In addition, by \eqref{eq:c4n} and \eqref{eq:kerH-decomp}, we have
\begin{prop}
\label{prop:prot-1}
For all $ \alpha \in \CC $, 
\[ \dim \ker_{ L^2_{{0,0} }(\CC; \CC^{2n})}  D_n(\alpha)^*  \geq 1 , \quad  
{e}_{2n-1} \in \ker_{ L^2_{{1,[n-1]}}( \CC; \CC^{2n} )}  D_n(\alpha)^*.
\]
\end{prop}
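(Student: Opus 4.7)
My plan is to handle the two assertions of the proposition in turn, paralleling the proof of Proposition~\ref{prop:prot}.

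For the explicit claim $e_{2n-1}\in\ker_{L^2_{1,[n-1]}(\CC;\CC^{2n})}D_n(\alpha)^*$, I would argue by direct computation. The vector $e_{2n-1}$ is constant and supported in the first slot of the $n$-th $\CC^2$ block of $\CC^{2n}$. Inspecting the adjoint of the matrix in \eqref{eq:defD}, the only entries of $D_n(\alpha)^*$ that can act on this slot are the diagonal operator $D(0)^*$ (which annihilates constants) and the off-diagonal coupling $t_{n-1}T_-^*=t_{n-1}T_-$ to the $(n-1)$-th block (which annihilates $(1,0)^\top$). Thus $D_n(\alpha)^*e_{2n-1}=0$ independently of $\alpha$. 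The symmetry class follows immediately from $\mathscr L_{\mathbf a}e_{2n-1}=\omega^{a_1+a_2}e_{2n-1}$ and $\mathscr C_{2n}e_{2n-1}=Je_{2n-1}=\bar\omega^{n-1}e_{2n-1}$, giving $\kappa=1$ and $p=[n-1]$.

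For the dimension bound $\dim\ker_{L^2_{0,0}(\CC;\CC^{2n})}D_n(\alpha)^*\geq 1$, I would rerun the persistence argument behind Proposition~\ref{prop:prot}, applied now to the lower-block basis vector $e_{2n+2}$. At $\alpha=0$, \eqref{eq:ps} places $e_{2n+2}$ in $\ker_{L^2_{0,1}(\CC;\CC^{4n})}H(0)$; by the block splitting \eqref{eq:kerH-decomp} this is a purely lower-block contribution and corresponds to $e_2\in\ker_{L^2_{0,0}(\CC;\CC^{2n})}D_n(0)^*$. The intertwining \eqref{eq:c4n} shows $H(\alpha)$ preserves $L^2_{0,1}(\CC;\CC^{4n})$, while the chiral symmetry $\mathscr W$ (Proposition~\ref{prop:specH}) forces the restricted spectrum to be discrete and symmetric about zero. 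Holomorphic perturbation theory for the family $H(\alpha)$, together with the $\pm$-pairing of nonzero eigenvalues, then ensures a zero eigenvalue persists for every $\alpha\in\CC$. Accounting for the extra $\bar\omega$ in the lower-block action of $\mathscr C_{4n}$, which shifts the $p$-grading by $-1$ when one passes from $L^2_{\cdot,1}(\CC;\CC^{4n})$ to the lower $\CC^{2n}$ block, the persisting kernel element yields a nonzero vector in $\ker_{L^2_{0,0}(\CC;\CC^{2n})}D_n(\alpha)^*$.

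The delicate point is to make sure the persisting kernel of $H(\alpha)$ is not absorbed entirely into $\ker D_n(\alpha)|_{L^2_{0,1}(\CC;\CC^{2n})}$ in the upper block, leaving $\ker D_n(\alpha)^*|_{L^2_{0,0}}$ trivial. I would handle this through the $\alpha$-independent Fredholm index of $D_n(\alpha)^*\colon L^2_{0,0}(\CC;\CC^{2n})\to L^2_{0,1}(\CC;\CC^{2n})$, evaluated at $\alpha=0$ from the explicit kernels $\ker D_n(0)=\operatorname{span}\{e_1,e_{2n}\}$ and $\ker D_n(0)^*=\operatorname{span}\{e_2,e_{2n-1}\}$ together with their symmetry classes. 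When the four protected kernel vectors of $H(0)$ fall into four distinct $L^2_{\kappa,p}(\CC;\CC^{4n})$ classes (i.e.\ $n\not\equiv 0\pmod 3$), the index computation is immediate and gives the lower bound directly; in the remaining case $n\equiv 0\pmod 3$, one combines this with the fact that $e_{2n}$ stays in $\ker D_n(\alpha)$ for all $\alpha$ (the second half of Proposition~\ref{prop:prot}) to separate the upper- and lower-block contributions and extract the required bound.
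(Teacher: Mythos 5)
Your proof is correct and follows essentially the same route as the paper: a direct computation showing the constant vector $e_{2n-1}$ is annihilated by $D_n(\alpha)^*$ (only $D(0)^*$ and $\bar t_{n-1}T_-$ act on its slot, both killing it), and the holomorphic persistence of the zero mode of $H(\alpha)$ in the symmetry sector of $e_{2n+2}$ combined with the splitting \eqref{eq:kerH-decomp} and the $\bar\omega$-shift in \eqref{eq:c4n} for the dimension bound. The one genuine addition is your treatment of the ``delicate point'': the paper's one-line justification does not explain why the persisting kernel element of $H(\alpha)|_{L^2_{0,1}(\CC;\CC^{4n})}$ must land in the $\ker D_n(\alpha)^*$ summand rather than in $\ker D_n(\alpha)|_{L^2_{0,1}(\CC;\CC^{2n})}$, and indeed for $n\equiv 0\pmod 3$ the sectors of $e_{2n}$ and $e_{2n+2}$ coincide, so $\dim\ker_{L^2_{0,1}}H(0)=2$ and the bare parity/persistence argument degenerates. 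Your fix via the $\alpha$-independent Fredholm index of $D_n(\alpha)^*\colon L^2_{0,0}\to L^2_{0,1}$ is sound: the index equals $1$ when $n\not\equiv 0\pmod 3$, giving the bound outright, and equals $0$ when $n\equiv 0\pmod 3$, in which case the protected cokernel element $e_{2n}\in\ker D_n(\alpha)|_{L^2_{0,1}}$ forces $\dim\ker D_n(\alpha)^*|_{L^2_{0,0}}=\dim\operatorname{coker}\geq 1$. This is a tighter argument than what the paper records.
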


\section{Generic existence of Jordan block structure}
\label{sec:jordan}
In contrast to TBG, we shall see that the TMG model has a generically protected Jordan block structure, which leads to new phenomena in the band structure of the TMG Hamiltonian. From now on, we focus on the Dirac point $\kappa=0$ (i.e.,~$\bk=0$), the other Dirac point $\kappa=1$ (i.e.,~$\bk =-i$) is similar. For notational convenience, we define
$$L^2_{\kappa,p}:= L^2_{\kappa,p} ( \CC; \CC^{2n} ), \quad L^2_{k}:= L^2_{k} ( \CC; \CC^{2n} ).$$

Using the fact that $\Spec_{L^2_0}D_n(\alpha) = 3\Lambda^* + \{0, i\}$ for $\alpha\notin\magic$ (cf.~\cite[equation (3.12)]{yang2023flat}), we define the spectral projection 
\begin{equation}
    \Pi (\alpha) := -\frac{1}{2\pi i}\oint_{\gamma_0} (D_n(\alpha) - z)^{-1}\vert_{L^2_0} \,dz, \quad \alpha\notin\magic,
\end{equation}
where the contour $\gamma_0$ is taken to be near 0 independent of $\alpha$.

Note that for $\alpha=0$, the operator 
\begin{equation}
    \label{eq:D-map}
    D_n(\alpha): L^2_{0,j}\to L^2_{0,j-1}, \quad j\in\ZZ_3
\end{equation} 
has a Jordan block structure. That is, for $e_2\in \coker_{L^2_{0,0}}D_n(0)$ and $e_{2n}\in \ker_{L^2_{0,[1-n]}}D_n(0)$, we have 
\begin{equation}
\label{eq:jb0}
    D_n(0)e_{2k} = t_k e_{2k+2} \in L^2_{0,[-k]}, \quad k= 1,2, \cdots , n-1.
\end{equation}
We would like to study if such Jordan block structure of $D_n(\alpha)$ persists for $\alpha\in\CC$. In this section, we prove the following 
\begin{theo}
\label{thm:jordan}
There exists a discrete set $\mathcal{B}\subset \CC$, such that for any $\alpha\in \CC\backslash (\mathcal{B}\cup \mathcal{A})$, we have $\rank \Pi(\alpha) = n$, and there exist $u_k\in L^2_{0,[k-n]}, k = \{1, 2, \cdots, n\}$  such that
\begin{gather}
    \label{eq:ker-coker}
    \coker_{ L^2_{{ 0,0} } }  D_n(\alpha)  = \CC u_n, \quad \ker_{ L^2_{{ 0,[1-n]} } }  D_n(\alpha)  =  \CC u_1, \\
    \label{eq:chain}
    D_n(\alpha) u_{k+1} = u_{k}, \quad u_1 =  e_{2n}.
\end{gather}
\end{theo}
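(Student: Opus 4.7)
The plan is to establish the Jordan block structure by analytic continuation from $\alpha=0$, where equation~\eqref{eq:jb0} makes the structure completely explicit. First I would verify that $\Pi(0)$ has rank exactly $n$: the tunneling chain $e_{2k}\mapsto t_k e_{2(k+1)}$ in~\eqref{eq:jb0} exhibits a Jordan block of length $n$ inside the generalized kernel of $D_n(0)|_{L^2_0}$, giving $\rank\Pi(0)\geq n$; the matching upper bound uses the explicit form $D(0)=2D_{\bar z}$, so that on each graded summand $L^2_{0,p}$ only a one-dimensional kernel appears and the triangular tunneling structure of~\eqref{eq:defD} forces the generalized kernel at $0$ to be exactly $\mathrm{span}\{e_{2k}\}_{k=1}^n$.

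Next, I would propagate the rank to all of $\CC\setminus\magic$ and identify $\mathcal{B}$. Since $\Spec_{L^2_0}D_n(\alpha)=3\Lambda^*+\{0,i\}$ is independent of $\alpha\notin\magic$, the fixed contour $\gamma_0$ is never crossed by eigenvalues, so $\Pi(\alpha)$ is a holomorphic family of finite-rank projections on the connected open set $\CC\setminus\magic$, and by Kato's theory $\rank\Pi(\alpha)\equiv n$. On $\Imag\Pi(\alpha)$, I set $N(\alpha):=D_n(\alpha)|_{\Imag\Pi(\alpha)}$, which (after choosing a local holomorphic frame for $\Imag\Pi(\alpha)$) is a holomorphic family of $n\times n$ nilpotent matrices. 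A single Jordan block of size $n$ is equivalent to $N(\alpha)^{n-1}\neq 0$, and at $\alpha=0$ the chain~\eqref{eq:jb0} gives $N(0)^{n-1}e_2=\big(\textstyle\prod_{k=1}^{n-1}t_k\big)e_{2n}\neq 0$. The identity theorem in one complex variable then forces
\[
\mathcal{B}:=\{\alpha\in\CC\setminus\magic : N(\alpha)^{n-1}=0\}
\]
to be a discrete subset of $\CC$.

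For $\alpha\in\CC\setminus(\magic\cup\mathcal{B})$, $N(\alpha)$ is a single Jordan block of size $n$, so $\dim\ker N(\alpha)=\dim\coker N(\alpha)=1$. Combined with Proposition~\ref{prop:prot} this pins down $\ker_{L^2_{0,[1-n]}}D_n(\alpha)=\CC e_{2n}$, so I set $u_1:=e_{2n}$ and solve $D_n(\alpha)u_{k+1}=u_k$ iteratively inside $\Imag\Pi(\alpha)$, which is possible since $\Imag N(\alpha)\supset\ker N(\alpha)$ in a length-$n$ Jordan block. Because $D_n(\alpha)$ shifts the $\ZZ_3$-grade by $-1$ (cf.~\eqref{eq:D-map}) and $u_1\in L^2_{0,[1-n]}$, induction places $u_k\in L^2_{0,[k-n]}$. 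The cokernel equality $\coker_{L^2_{0,0}}D_n(\alpha)=\CC u_n$ follows because any preimage of $u_n$ in $L^2_{0,1}$ would extend the chain to length $n+1$, contradicting the single-block structure once one splits off the complementary spectral subspace $\Imag(\id-\Pi(\alpha))$ on which $D_n(\alpha)$ is invertible at $0$.

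The main obstacle is the upper bound in the first step: ruling out stray generalized eigenvectors of $D_n(0)|_{L^2_0}$ at $0$ outside the visible chain $\{e_{2k}\}_{k=1}^n$. This requires combining the explicit spectral theory of $D_{\bar z}$ on $\CC/\Lambda$ with the triangular tunneling structure of~\eqref{eq:defD}; once the baseline rank is established, the remaining holomorphic-family and Jordan-normal-form arguments are essentially formal.
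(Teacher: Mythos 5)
Your proposal is correct in its essential logic but takes a genuinely different route from the paper at the key step of locating $\mathcal{B}$ and producing $u_n$. The paper first notes that the middle of the chain is explicit and $\alpha$-independent: the potential enters only the first $2\times2$ block of \eqref{eq:defD}, so the constant vectors $e_{2n},e_{2n-2},\dots,e_4$ satisfy $D_n(\alpha)e_{2k}=t_ke_{2k+2}$ for \emph{all} $\alpha$, and only the single equation $D_n(\alpha)u_n=e_4$ remains; this is solved by inverting $D_n(\alpha)$ between the orthocomplement spaces of Lemma~\ref{lem:invert}, with $\mathcal{B}_0$ characterized through \eqref{eq:Tk} as the set of $\alpha$ for which $-1/\alpha$ lies in the spectrum of the fixed compact operator $D_n(0)^{-1}V_n$. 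You instead reduce to the finite-dimensional nilpotent family $N(\alpha)=D_n(\alpha)\vert_{\Imag\Pi(\alpha)}$ and set $\mathcal{B}=\{N(\alpha)^{n-1}=0\}$; since $D_n(\alpha)^{n-1}\Pi(\alpha)=-\tfrac{1}{2\pi i}\oint_{\gamma_0}z^{n-1}(D_n(\alpha)-z)^{-1}\,dz$ is a holomorphic finite-rank family whose matrix element against $e_2$ and $e_{2n}$ equals $\prod_k t_k\neq0$ at $\alpha=0$, your exceptional set is indeed the zero set of a nontrivial holomorphic function, and the single-Jordan-block dichotomy together with Proposition~\ref{prop:prot} then yields \eqref{eq:ker-coker}--\eqref{eq:chain} as you describe. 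Your route is softer and arguably cleaner, but it buys less in three respects: (i) the identity theorem only rules out accumulation of $\mathcal{B}$ inside $\CC\setminus\mathcal{A}$, so a priori $\mathcal{B}$ could accumulate at magic angles, whereas the compact-operator characterization makes $\mathcal{B}_0$ discrete in all of $\CC$ and numerically computable (Table~\ref{tab:angles}); (ii) you lose the fact, used after the theorem and in Section~\ref{sec:cone}, that $u_1,\dots,u_{n-1}$ may be taken to be the explicit mutually orthogonal constant vectors $e_{2n},\dots,e_4$ even when $\alpha\in\mathcal{B}$; (iii) as you yourself flag, the baseline computation $\rank\Pi(0)=n$ (no stray generalized eigenvectors of $D_n(0)$ at $0$) must still be done by hand, which the paper imports from \cite{yang2023flat}. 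None of these is a fatal gap, but the first and third would need to be filled to match the stated theorem.
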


As the projection $\Pi(\alpha)$ is holomorphic in $\CC\backslash\magic$ with $\rank \Pi(0) = n$, we have  
\begin{equation}
\label{eq:rank-2}
    \rank \Pi(\alpha) \equiv \tr \Pi(\alpha) \equiv n, \quad \alpha\in \CC\backslash\magic,
\end{equation}
which implies the generalized kernel of $D_n(\alpha)\rvert_{L^2_0}$ is $n$-dimensional for $\alpha\in \CC\backslash\magic$. Hence, it implies the first part of Theorem \ref{thm:jordan}. For $\alpha=0$, we define the orthogonal complement of $\ker_{ L^2_{{ 0,[1-n]} }}  D_n(0)$ and $\coker_{ L^2_{{0,0} }}  D_n(0)$: 
\begin{gather}
\label{eq:newspace}
    \prescript{\sharp}{}{L^2_{{0,j}}}:= \{f\in L^2_{{0,j}}: \langle f, e_{2n} \rangle = 0 \}, \quad 
    \prescript{\flat}{}{L^2_{{0,j}}}:= \{f\in L^2_{{0,j}}: \langle f, e_{2} \rangle = 0 \}.
\end{gather}
We introduce a useful lemma on the (almost) invertibility of $D(\alpha)\rvert_{L^2_{0,j}}$.
\begin{lemm}
\label{lem:invert}
There exist discrete sets $\mathcal{B}_j, j\in \ZZ_3$ such that the operator 
$$ D_n(\alpha): \prescript{\sharp}{}{L^2_{{0,j}}} \to \prescript{\flat}{}{L^2_{{0,j-1}}}$$ 
is invertible for $\alpha\notin \mathcal{B}_j$.     
\end{lemm}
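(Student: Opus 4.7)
The plan is to apply the analytic Fredholm theorem to the holomorphic family
\[
\alpha \mapsto \pi_\flat \circ D_n(\alpha) \circ \iota_\sharp : \prescript{\sharp}{}{L^2_{0,j}} \to \prescript{\flat}{}{L^2_{0,j-1}},
\]
where $\iota_\sharp$ is the natural inclusion and $\pi_\flat$ is the orthogonal projection onto $\prescript{\flat}{}{L^2_{0,j-1}}$. The three steps are: (i) show this is a holomorphic family of Fredholm operators of index zero; (ii) verify invertibility at the single point $\alpha = 0$; (iii) invoke analytic Fredholm to conclude discreteness of the exceptional set $\mathcal{B}_j$.

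For (i), write $D_n(\alpha) = D_n(0) + \alpha V$ with $V$ a bounded zeroth-order multiplication. Since $D_n(0)$ is first-order elliptic on the compact torus $\CC/3\Lambda$, the family $D_n(\alpha): L^2_{0,j} \to L^2_{0,j-1}$ is holomorphic in $\alpha$ and Fredholm. A direct Fourier analysis shows $\ker_{L^2_0} D_n(0) = \CC e_{2n}$ and $\ker_{L^2_0} D_n(0)^* = \CC e_2$: on non-zero Fourier modes $k \in \Lambda^*$ the symbols $2 \hat D_{\bar z}(k) I + N$ and its adjoint are invertible by ellipticity together with the nilpotency of the coupling $N$ (built from the $t_i T_\pm$), while on constants the kernel and cokernel are read off from the Jordan chain \eqref{eq:jb0} and its reverse. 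Because $e_{2n} \in L^2_{0,[1-n]}$ and $e_2 \in L^2_{0,0}$, the $\sharp$- and $\flat$-restrictions remove precisely the protected kernel/cokernel directions in the corresponding grades, and a short bookkeeping verifies that the restricted operator has Fredholm index zero for every $j \in \ZZ_3$.

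For (ii), at $\alpha = 0$ and for each $j$ one checks that $\prescript{\sharp}{}{L^2_{0,j}} \cap \ker D_n(0) = 0$ and that $D_n(0)(\prescript{\sharp}{}{L^2_{0,j}})$ fills $\prescript{\flat}{}{L^2_{0,j-1}}$. The non-constant part is handled by the Fourier invertibility above; on constants the only obstructions are the one-dimensional $\CC e_{2n}$ (kernel) and $\CC e_2$ (cokernel) picked out by the Jordan chain \eqref{eq:jb0}, which are killed exactly by $\sharp$ and $\flat$. Hence the restricted map at $\alpha = 0$ is bijective.

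Finally, (iii) follows from the analytic Fredholm theorem, which yields the dichotomy that either the restricted family is nowhere invertible, or $\mathcal{B}_j$ is a discrete subset of $\CC$; step (ii) rules out the former. The main obstacle is the index-zero claim in (i): individual graded maps $D_n(0): L^2_{0,j} \to L^2_{0,j-1}$ can have nonzero Fredholm index (only the total index over the three grades vanishes), and one must check that the codimension-one $\sharp$/$\flat$ cuts are placed in exactly the right grades to compensate for this imbalance — a verification resting on the explicit Jordan chain \eqref{eq:jb0} together with Propositions~\ref{prop:prot} and~\ref{prop:prot-1}.
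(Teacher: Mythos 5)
Your overall strategy matches the paper's: establish bijectivity at $\alpha=0$ from the explicit Jordan chain \eqref{eq:jb0} and the location of the protected kernel/cokernel in the graded components, then propagate to all but a discrete set of $\alpha$ by an analytic Fredholm argument. The paper packages the second step differently: it factors $D_n(\alpha)\rvert_{\prescript{\sharp}{}{L^2_{0,j}}} = D_n(0)\rvert_{\prescript{\sharp}{}{L^2_{0,j}}}\bigl(\id + \alpha\, (D_n(0)\rvert_{\prescript{\sharp}{}{L^2_{0,j}}})^{-1}V_n\bigr)$ and uses compactness of $D_n(0)^{-1}V_n$, which is equivalent to your appeal to analytic Fredholm theory but has the added benefit of characterizing $\mathcal{B}_j$ via \eqref{eq:Tk} as reciprocals of eigenvalues of a fixed compact operator (this is what the paper uses to compute $\mathcal{B}$ numerically). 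Your worry about the individual graded maps having nonzero index is legitimate, and your step (ii) resolves it correctly: bijectivity of the cut operator at $\alpha=0$ forces index zero for all $\alpha$ by homotopy invariance.

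There is, however, one step you skip that the paper does not. The lemma asserts invertibility of $D_n(\alpha)$ \emph{itself} as a map $\prescript{\sharp}{}{L^2_{0,j}}\to\prescript{\flat}{}{L^2_{0,j-1}}$, whereas you analyze the compression $\pi_\flat\circ D_n(\alpha)\circ\iota_\sharp$. These agree only if $D_n(\alpha)$ actually maps $\prescript{\sharp}{}{L^2_{0,j}}$ into $\prescript{\flat}{}{L^2_{0,j-1}}$ for all $\alpha$; since $D_n(0)$ does so by the grading, this reduces to showing that the multiplication operator $V_n$ in \eqref{eq:Vn} maps $\prescript{\sharp}{}{L^2_{0,j}}$ into $\prescript{\flat}{}{L^2_{0,j-1}}$, i.e.\ that $\langle V_n u, e_2\rangle=0$ on the relevant space. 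If this fails, invertibility of the compression says nothing about the operator in the statement (which would not even be well defined between those spaces), and the application in Theorem~\ref{thm:jordan} — producing $u_n\in\prescript{\sharp}{}{L^2_{0,0}}$ with $D_n(\alpha)u_n$ exactly equal to $e_4$, with no spurious $e_2$-component — would not follow. The paper devotes a separate step to this verification (the discussion around \eqref{eq:idK}, where rotational symmetry of $U$ disposes of all cases except $j\equiv 1$). Note the issue is vacuous whenever $j-1\not\equiv 0$, since then $\prescript{\flat}{}{L^2_{0,j-1}}=L^2_{0,j-1}$ — in particular for the case $j=0$ actually used later — but for the lemma as stated over all $j\in\ZZ_3$ you need to supply this range condition.
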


\begin{proof}
Recall that the operator $D_n(\alpha): L^2_{0}\to L^2_{0}$ is Fredholm with index zero with protected states 
\[{e}_{2n} \in \ker_{ L^2_{{ 0,[1-n]} }}  D_n(\alpha), \quad \dim \coker_{ L^2_{{0,0} }}  D_n(\alpha) \geq 1.\]
In particular, when $\alpha=0$, we have
\[\dim \ker_{ L^2_{{ 0,[1-n]} }}  D_n(0) = \dim \coker_{ L^2_{{0,0} }}  D_n(0)=1.\]
By equation \eqref{eq:decomp} and \eqref{eq:D-map}, we conclude that the operator 
\[D_n(\alpha): L^2_{0,j}\to L^2_{0,j-1}, \quad j\in\ZZ_3\]
is Fredholm with possible kernel and cokernel containing the protected states. Hence, the operator 
\begin{gather}
    D_n(0): \prescript{\sharp}{}{L^2_{{0,j}}} \to \prescript{\flat}{}{L^2_{{0,j-1}}}, \quad j\in \ZZ_3
\end{gather}
is invertible\footnote{Note that for $j\not\equiv [1-n]$, $\prescript{\sharp}{}{L^2_{{0,j}}} = L^2_{{0,j}}$ and for $j\not\equiv 0$, $\prescript{\flat}{}{L^2_{{0,j}}} = L^2_{{0,j}}$.}. Formally we shall write
\begin{gather}
    \label{eq:res-id}
    D_n(\alpha)\rvert_{\prescript{\sharp}{}{L^2_{{0,j}}}} = D_n(0)\rvert_{\prescript{\sharp}{}{L^2_{{0,j}}}} \left(\id + \alpha \big(D_n(0)\rvert_{\prescript{\sharp}{}{L^2_{{0,j}}}}\big)^{-1} V_n \right), \\ 
    \label{eq:Vn}
    V_n = \diag \left( \begin{pmatrix} 0  &    U(z) \\  
     U(-z) & 0   \end{pmatrix}, 0_{2\times 2}, \cdots \right),
\end{gather}
where
\begin{gather}
\label{eq:idK}
    \id + \alpha \big(D_n(0)\rvert_{\prescript{\sharp}{}{L^2_{{0,j}}}}\big)^{-1} V_n : \prescript{\sharp}{}{L^2_{{0,j}}} \to \prescript{\sharp}{}{L^2_{{0,j}}}. 
\end{gather}
The operator \eqref{eq:idK} is well-defined provided the multiplication operator $V_n$ maps the space $\prescript{\sharp}{}{L^2_{{0,j}}}$ into $\prescript{\flat}{}{L^2_{{0,j-1}}}$. This can be easily verified: the only non-trivial case is $j\equiv 1$. In such case, assume $V_n u = e_2$, then we see that by \eqref{eq:Vn} and \eqref{eq:symmU}, $u$ is not locally $L^2$-integrable. Hence, the identity \eqref{eq:res-id} holds. 

By \eqref{eq:res-id}, the operator $D_n(\alpha): {\prescript{\sharp}{}{L^2_{{0,j}}}} \to \prescript{\flat}{}{L^2_{{0,j-1}}}$ is invertible if and only if 
\begin{equation}
\label{eq:Tk}
    -\frac{1}{\alpha}\notin \Spec_{ \prescript{\sharp}{}{L^2_{{0,j}}} } D_n(0)^{-1} V_n.
\end{equation}
As the operator $ D_n(\alpha)^{-1} V_n: \prescript{\sharp}{}{L^2_{{0,j}}}\ \to \prescript{\sharp}{}{L^2_{{0,j}}}$ is compact, it has a discrete set of spectrum $\mathcal{B}_j$. This yields the invertibility of $ D_n(\alpha): \prescript{\sharp}{}{L^2_{{0,j}}} \to \prescript{\flat}{}{L^2_{{0,j-1}}}$ for $\alpha\notin \mathcal{B}_j$. 
\end{proof}

\begin{proof}[Proof of Theorem \ref{thm:jordan}]
Recall that the operator $D_n(\alpha): L^2_{0}\to L^2_{0}$ is an analytic family of Fredholm operators with index zero, and with protected states given by 
\[{e}_{2n} \in \ker_{ L^2_{{ 0,[1-n]} }}  D_n(\alpha), \quad \dim \coker_{ L^2_{{0,0} }}  D_n(\alpha) \geq 1.\]
An elementary linear algebra computation yields a chain of generalized eigenstates
$$D_{n}(\alpha)u_{m+1} = t_{n-m} u_{m}, \quad  m = 1,2,\cdots n-2, \quad u_{m} := e_{2(n+1-m)}.$$ 
This gives $n - 1$ generalized eigenstates directly. It remains to solve the equation
\begin{equation}
\label{eq:psi}
    D_{n}(\alpha) u_n = e_{4}
\end{equation}
As $e_4\in \prescript{\flat}{}{L^2_{{0,-1}}}$ trivially orthogonal to $e_2$, by Lemma \ref{lem:invert} and definition \eqref{eq:newspace}, for $\alpha\notin \mathcal{B}_0$ there exists an element $u_n\in \prescript{\sharp}{}{L^2_{{0,0}}}$ solves the equation \eqref{eq:psi}. Hence, combining with the fact that $\rank \Pi(\alpha)\equiv n$ for $\alpha\notin\magic$, the set $\mathcal{B} \coloneq \mathcal{B}_0\backslash \magic$ satisfies the theorem. 
\end{proof}

\begin{rem}
    Note that the set $\mathcal{B}$ is independent of $n$. This follows from the exactly same argument in the proof of \cite[Theorem~1]{yang2023flat}. 
\end{rem}

\begin{figure}[h]
    \centering
    \includegraphics[width=0.70\linewidth]{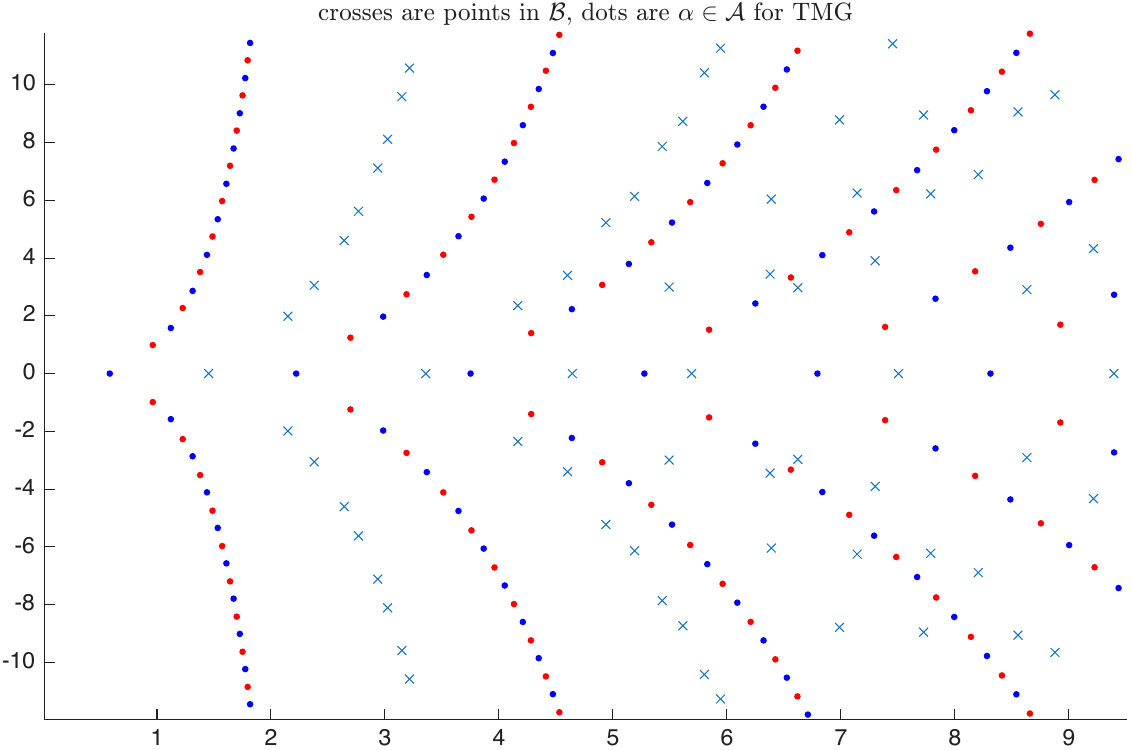}
    \caption{The set $\magic$ (dots) and $\mathcal{B}$ (crosses).}
    \label{fig:new-spec}
\end{figure}

From the proof of Theorem \ref{thm:jordan}, the holomorphic family of Fredholm operators  
\[ D_n(\alpha): \prescript{\sharp}{}{L^2_{{0,0}}} \to \prescript{\flat}{}{L^2_{{0,-1}}}\]
with index zero is not invertible for $\alpha\in \mathcal{B}_0$. Hence it must have a nontrivial kernel in $\prescript{\sharp}{}{L^2_{{0,0}}}$. For $\alpha\in \mathcal{B}\subset \mathcal{B}_0$, we have $\rank \Pi(\alpha) = n$ with the generalized kernel contains $\{u_1, \cdots, u_{n-1}\}$ satisfying \eqref{eq:chain} orthogonal to each other. Hence, we must have exactly
    \begin{equation}
        \dim \ker_{\prescript{\sharp}{}{L^2_{{0,0}}}}  D_n(\alpha) = \dim \coker_{\prescript{\flat}{}{L^2_{{0,-1}}}}  D_n(\alpha) = 1.
    \end{equation}

Therefore, combining with Proposition \ref{prop:prot} and \ref{prop:prot-1}, we have the following
\begin{corr}
\label{cor:dim}
   We have
   \begin{enumerate}
        \item for $\alpha \notin \mathcal A \cup \mathcal B$, $\dim \ker_{L^2_0} D_{n}(\alpha) = \dim \ker_{L^2_0} D_{n}(\alpha)^* = 1$;
        \item for $\alpha\in \mathcal B$, $\dim \ker_{L^2_0} D_{n}(\alpha) = \dim \ker_{L^2_0} D_{n}(\alpha)^* = 2$.
    \end{enumerate}
\end{corr}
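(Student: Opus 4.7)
The plan is to read off both dimensions from Theorem \ref{thm:jordan} using the identity $\rank \Pi(\alpha) \equiv n$ in \eqref{eq:rank-2}, combined with the Fredholm index zero property of $D_n(\alpha)\rvert_{L^2_0}$ that is already exploited in the proof of Lemma \ref{lem:invert}.

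For part (1), I fix $\alpha \notin \mathcal{A}\cup\mathcal{B}$. Theorem \ref{thm:jordan} supplies a length-$n$ Jordan chain $u_1,\ldots,u_n$ with $u_k \in L^2_{0,[k-n]}$, $u_1 = e_{2n}$, and $D_n(\alpha)u_{k+1}=u_k$. Since these vectors lie in pairwise distinct isotypic components of the decomposition \eqref{eq:decomp}, they are automatically linearly independent; together with $\rank \Pi(\alpha)=n$ they must therefore span $\Imag \Pi(\alpha)$. Any $v\in \ker_{L^2_0} D_n(\alpha)$ is then a linear combination $v=\sum_{k=1}^n c_k u_k$, and applying $D_n(\alpha)$ gives $\sum_{k=2}^n c_k u_{k-1}=0$; linear independence of $u_1,\ldots,u_{n-1}$ forces $c_2=\cdots=c_n=0$. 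I conclude $\ker_{L^2_0} D_n(\alpha)=\CC u_1$, after which the Fredholm index zero property yields $\dim \ker_{L^2_0} D_n(\alpha)^*=1$ as well.

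For part (2), I fix $\alpha \in \mathcal{B}$. The proof of Theorem \ref{thm:jordan} now produces only the truncated chain $u_1,\ldots,u_{n-1}$, since the capping equation $D_n(\alpha)u_n=e_4$ fails to be solvable in $\prescript{\sharp}{}{L^2_{0,0}}$. The identity $\rank \Pi(\alpha)=n$ still holds by \eqref{eq:rank-2}, so the missing dimension must be supplied by something outside the image of $D_n(\alpha)$; the display immediately preceding the corollary identifies this as an honest extra kernel vector $v\in \ker_{\prescript{\sharp}{}{L^2_{0,0}}} D_n(\alpha)$. Because $v\in L^2_{0,0}$ satisfies $\langle v,e_{2n}\rangle=0$ by the definition \eqref{eq:newspace} of $\prescript{\sharp}{}{L^2_{0,0}}$, it is linearly independent from $u_1=e_{2n}$ regardless of whether $[1-n]\equiv 0 \pmod 3$, giving $\dim \ker_{L^2_0} D_n(\alpha)\geq 2$. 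For the matching upper bound I would observe that any third independent kernel element, together with the generalized eigenstates $\{u_2,\ldots,u_{n-1}\}$, would produce an $(n+1)$-dimensional subspace of $\Imag \Pi(\alpha)$, contradicting \eqref{eq:rank-2}; the cokernel dimension then equals $2$ again by Fredholm index zero.

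The principal bookkeeping hurdle is the case $n\equiv 1 \pmod 3$, where both $e_{2n}$ and the additional vector $v$ live in the same component $L^2_{0,0}$. The definition of the punctured space $\prescript{\sharp}{}{L^2_{0,0}}$ in \eqref{eq:newspace} is precisely what forces $v\perp e_{2n}$ and thereby keeps the dimension count uniform across residue classes of $n$ modulo $3$.
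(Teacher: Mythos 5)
Your proposal follows essentially the same route as the paper: combine the Jordan chain of Theorem \ref{thm:jordan} with the rank identity \eqref{eq:rank-2} and the index-zero Fredholm property, and in case (2) account for the $n$-th dimension of $\operatorname{ran}\Pi(\alpha)$ by the extra kernel vector in $\prescript{\sharp}{}{L^2_{0,0}}$ coming from non-invertibility of $D_n(\alpha):\prescript{\sharp}{}{L^2_{0,0}}\to\prescript{\flat}{}{L^2_{0,-1}}$. The dimension counts are correct, and your observation that the orthogonality built into \eqref{eq:newspace} keeps $v$ independent of $u_1=e_{2n}$ even when $n\equiv 1\pmod 3$ is exactly the point that matters.

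One justification is wrong, however. You claim $u_1,\dots,u_n$ are linearly independent because they ``lie in pairwise distinct isotypic components'' $L^2_{0,[k-n]}$ of \eqref{eq:decomp}. The grading is by $\ZZ_3$, so there are only three components; for $n\ge 4$ the residues $[k-n]$ necessarily repeat and this argument collapses. The conclusion is still true, but for the standard reason: if $\sum_{k}c_k u_k=0$, applying $D_n(\alpha)^{n-1}$, then $D_n(\alpha)^{n-2}$, etc., and using \eqref{eq:chain} kills the coefficients from the top down, since $D_n(\alpha)^{j}u_k=u_{k-j}\ne 0$ for $k>j$ and $=0$ for $k\le j$. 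The same repair is needed in your upper-bound argument in part (2), where you implicitly use independence of $\{u_2,\dots,u_{n-1}\}$ from the kernel vectors; there one applies powers of $D_n(\alpha)$ to a vanishing combination to reduce to a relation among kernel elements. A second, more minor point: the assertion that the missing dimension in case (2) is an \emph{honest} kernel vector (rather than a generalized eigenvector of higher order) deserves a sentence --- either quote the Fredholm index-zero non-invertibility on the punctured spaces, as the paper does, or note that any extra generalized-kernel element $w$ with $D_n(\alpha)w\in\operatorname{span}\{u_1,\dots,u_{n-2}\}$ can be corrected by subtracting chain elements to land in $\ker D_n(\alpha)$, while $D_n(\alpha)w$ cannot have a $u_{n-1}$-component without solving the capping equation. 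With these repairs the argument is complete and matches the paper's.
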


\begin{rem}
    The dimension of kernel can be much more complicated for $\alpha\in \mathcal{A}$, depending on the form of the potential $U$ in \eqref{eq:defU}. In fact, the dimension is given by the multiplicity of magic angles. As we focus on band structures away from magic angles, we will not discuss this issue in detail. See \cite{becker2023degenerate,yang2023flat} for more details. 
\end{rem}

\section{Generic existence of higher order band touching}
\label{sec:tangential}
In this section, we show part \eqref{part1} of Theorem \ref{thm:main}. Namely, for $k$ near $K = \bo, -\bi$ where $D_n(\alpha)+K$ has a Jordan block structure of rank $n$, the first two bands are given by \[E_{\pm 1} (\alpha; k) = \pm c_n(\alpha) |\kk - K|^n + \mathcal{O}(|\kk - K|^{n + 1}), \quad c_n(\alpha)>0.\]
Again, we focus on the band structure near the Dirac point $K=\bo$ for notational simplicity. The computation near $K=-\bi$ is essentially the same. We employ the so-called Grushin problem framework. For a more detailed discussion on Grushin problems, we refer to \cite[Appendix~C]{dyatlov2019mathematical}.  We start by proving a useful lemma.

\begin{lemm}\label{lem:dichotomy}
    Let $P \colon X \to X$ be a Fredholm operator of index $0$ for a Hilbert space $X$. Assume $$\ker P = \mathrm{span}\,\{\varphi_1\}, \quad \ker P^* = \mathrm{span}\,\{\varphi^*_1\}.$$ 
    Suppose that the generalized kernels of $P$ and $P^*$ are of dimension $n$, with corresponding generalized kernel elements $\varphi_1, \dots, \varphi_{n}$ and $\varphi_1^*, \dots, \varphi_{n}^*$ such that 
    $$P\varphi_i = \varphi_{i-1}, \quad P^*\varphi_i^* = \varphi_{i-1}^*, \quad i = 2,\cdots, n.$$ 
    Then
    \begin{enumerate}
        \item for any $k,\ell$ such that $k + \ell \leq n$, we have $\langle \varphi_k^*, \varphi_\ell\rangle = 0$;
        \item there is a dichotomy that either the operator $P - z$ is invertible in some punctured disk around $0$, in which case 
        $$\langle \varphi^*_1, \varphi_{n}\rangle = \langle \varphi^*_{n}, \varphi_1\rangle \ne 0,$$ 
        or the operator $P - z$ is not invertible for all $z$ with
        $$\langle \varphi^*_1, \varphi_{n}\rangle = \langle \varphi^*_{n}, \varphi_1\rangle = 0.$$
    \end{enumerate}
\end{lemm}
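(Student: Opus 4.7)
Part (1) will follow from a shift identity. For $k \le n-1$ and $\ell \ge 2$, moving $P^*$ across the inner product gives
\[
\langle \varphi_k^*, \varphi_\ell\rangle = \langle P^*\varphi_{k+1}^*, \varphi_\ell\rangle = \langle \varphi_{k+1}^*, \varphi_{\ell-1}\rangle.
\]
Iterating as far as the indices allow reduces any $\langle \varphi_k^*, \varphi_\ell\rangle$ with $k+\ell \le n+1$ to $\langle \varphi_{k+\ell-1}^*, \varphi_1\rangle$. For $k+\ell \le n$ one further has $k+\ell-1 \le n-1$, so $\varphi_{k+\ell-1}^* = P^*\varphi_{k+\ell}^*$ and the pairing collapses via $P\varphi_1 = 0$, proving part (1). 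The same iteration along the anti-diagonal $k+\ell = n+1$ yields $\langle \varphi_k^*, \varphi_\ell\rangle = \langle \varphi_n^*, \varphi_1\rangle$ for every such $(k,\ell)$, giving the common identity $\langle \varphi_1^*, \varphi_n\rangle = \langle \varphi_n^*, \varphi_1\rangle$ that appears in both alternatives of part (2).

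The dichotomy itself is supplied by the analytic Fredholm theorem applied to the holomorphic family $z \mapsto P-z$ of index-$0$ Fredholm operators: either $P-z$ is nowhere invertible, or the non-invertibility set is discrete. In the latter case $0$ must be one of the isolated exceptional points since $P$ already has nontrivial kernel, so $P-z$ is invertible on a punctured disk. What remains is to match these two alternatives with the vanishing or non-vanishing of $a := \langle \varphi_1^*, \varphi_n\rangle$; I would do this by proving ``punctured-disk invertibility $\Rightarrow a \neq 0$'' directly (the converse then follows from the dichotomy). When $0$ is isolated, the Riesz projection $\Pi = -\tfrac{1}{2\pi i}\oint(P-z)^{-1}\,dz$ has range $V := \mathrm{span}\{\varphi_1,\ldots,\varphi_n\}$ and $\Pi^*$ has range $V^* := \mathrm{span}\{\varphi_1^*,\ldots,\varphi_n^*\}$; since $\ker\Pi = (V^*)^\perp$, the pairing $V^* \times V \to \CC$ is non-degenerate and admits a biorthogonal basis $\{\tilde\varphi_1,\ldots,\tilde\varphi_n\}\subset V^*$ with $\langle \tilde\varphi_i, \varphi_j\rangle = \delta_{ij}$.

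Pushing $P$ across the pairing reveals $P^*\tilde\varphi_i = \tilde\varphi_{i+1}$ for $i<n$ and $P^*\tilde\varphi_n = 0$, so $\tilde\varphi_n$ lies in $\ker P^* = \mathrm{span}\{\varphi_1^*\}$; writing $\tilde\varphi_n = c_0\varphi_1^*$ with $c_0 \neq 0$ and unwinding the chain inductively, $\tilde\varphi_{n+1-k}$ is a combination of $\varphi_1^*,\ldots,\varphi_k^*$ with leading coefficient $c_0$, so the change-of-basis matrix between the two chains is lower-triangular Toeplitz. Inverting forces $M_{ij} := \langle \varphi_i^*, \varphi_j\rangle$ to have the Hankel form $M_{ij} = g(i+j-n-1)$ with $g(0) = 1/c_0$, which matches part (1) and yields $a = M_{1,n} = 1/c_0 \neq 0$. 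The converse $a\neq 0 \Rightarrow$ invertibility on a punctured disk can also be verified directly via a one-dimensional Grushin problem with $R_-c = c\varphi_n$ and $R_+u = \langle u,\varphi_n^*\rangle$; the augmented operator on $X\oplus\CC$ is invertible at $z=0$ precisely when $a\neq 0$, after which the Schur-complement effective Hamiltonian turns out to be a nonzero multiple of $z^n$. I expect the main obstacle to be careful index bookkeeping for the biorthogonal chain, which runs upward in $P^*$ while the given chain $\{\varphi_j^*\}$ runs downward; once this is sorted, the Hankel structure and the non-vanishing of the leading coefficient are automatic.
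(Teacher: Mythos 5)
Your proof is correct and follows essentially the same route as the paper: the adjoint-shift identity for part (1) together with the anti-diagonal symmetry, analytic Fredholm theory for the dichotomy, the Riesz projection to force $\langle \varphi_1^*,\varphi_n\rangle \neq 0$ in the invertible case (your biorthogonal/Toeplitz--Hankel bookkeeping is just a more explicit version of the paper's rank-$n$ projection formula \eqref{eq:proj} applied to $\varphi_1$), and a Grushin problem for the remaining implication. The only substantive difference is that you run the Grushin argument in the contrapositive direction ($\langle\varphi_n^*,\varphi_1\rangle\neq 0$ implies punctured-disk invertibility, bordering with $\varphi_n,\varphi_n^*$), whereas the paper borders with $\varphi_1,\varphi_1^*$ and reads off $\langle\varphi_n^*,\varphi_1\rangle=0$ from $E_{-+}(z)\equiv 0$ when $P-z$ is nowhere invertible; both are valid.
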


\begin{proof}
    Part (1) follows from the identity
    \begin{equation}
        \langle \varphi_k^*, \varphi_\ell\rangle = \langle (P^*)^{n - k}\varphi_{n}^*, \varphi_\ell \rangle = \langle \varphi^*_{n}, P^{n-k}\varphi_{\ell}\rangle = 0, \quad \text{for } n-k\geq \ell.
    \end{equation}  
    Now we prove part (2). The dichotomy on the invertibility of $P-z$ follows from analytic Fredholm theory (see \cite[Theorem~C.8]{dyatlov2019mathematical}). If $P - z$ is invertible in some punctured disk around $0$, for $r>0$ sufficiently small, consider the rank $n$ spectral projector 
    \[\Pi = \frac{1}{2\pi i}\oint_{\partial B_r(0)} (\zeta - P)^{-1}\,\mathrm d{\zeta},\] 
    such that $\Pi^2 = \Pi$, $\Pi P = P \Pi$ and $\Pi P^n = P^n \Pi=0$. Hence we have
    \begin{equation}
    \label{eq:prop-Pi}
        \operatorname{ran}\Pi \subset \ker P^n, \quad \operatorname{ran}P^n \subset \ker \Pi.
    \end{equation} 
   Equation \eqref{eq:prop-Pi} yields that
   \begin{equation}
       \label{eq:proj}
       \Pi v = \sum a_{ij} \langle v, \varphi^*_i\rangle \varphi_j
   \end{equation} 
   with a rank $n$ matrix $(a_{ij})_{1\leq i,j\leq n}$.  
   Note that 
   $ \varphi_1 = \Pi \varphi_1 = \sum a_{ij} \langle \varphi_1, \varphi^*_i \rangle \varphi_j.$  
   Part (1) implies that for $i \leq n - 1$, we have $\langle \varphi_1, \varphi^*_i\rangle = 0$, which yields 
   $\varphi_1 = \sum a_{i,n}\langle \varphi_1 , \varphi^*_{n} \rangle \varphi_j.$  
   Hence, we must have $a_{1, n} \neq 0$ and $\langle \varphi_1, \varphi^*_{n} \rangle = \langle \varphi_n, \varphi^*_{1} \rangle \ne 0$.

    Conversely, if $P-z$ is not invertible for any $z$, we consider the Grushin problem 
    \begin{equation}
    \begin{pmatrix}
        P - z & R_- \\ R_+ & 0
    \end{pmatrix}
        \colon X \oplus \CC \to X \oplus \CC
    \end{equation}
    where $\varphi_1^*(R_-(1)) = 1$ and $R_+\varphi_1 = 1$. The above  Grushin problem is invertible for $z$ small, with inverse 
    \[ \begin{pmatrix}
        E(z) & E_+(z) \\ E_-(z) & E_{-+}(z)
    \end{pmatrix} \colon X \oplus \CC \to X \oplus \CC.\]  
    By \cite[Proposition C.3]{dyatlov2019mathematical}, we have 
    \[E_{-+}(z) = E_{-+}(0) + \sum_{m = 1}^\infty z^mE_-(0)E(0)^{m - 1}E_+(0)\] 
    As $E_{-+}(z)\equiv 0$ for any $z$ by \cite[equation~(C.1.1)]{dyatlov2019mathematical}, all the terms in the series must vanish. Note that $E_+(0) = R_-$ and $E_-(0) = R_+$. In particular, the term $E_-(0)E(0)^{n - 1}E_+(0)1 = 0$ implies that 
    \[ R_+E(0)^{n - 1}R_-1 = \langle \varphi_{n}^*, \varphi_1\rangle = 0. \qedhere\]
\end{proof}

Now, we consider $\alpha$ as in Theorem \ref{thm:jordan} by letting $\alpha \notin \mathcal A \cup \mathcal B$. For $u_1\in \ker_{L^2_0}D_n(\alpha)$ and $v_1\in \ker_{L^2_0}D_n(\alpha)^*$, we denote 
$\varphi = (u_1,0)^T, \psi = (0,v_1)^T$ such that $\ker_{L^2_0} H(\alpha) = \operatorname{span}_{\CC}\{\varphi, \psi\}$. We consider the Grushin problem 
\begin{equation} 
\label{eq:Grushin1}
\mathcal P_k = \begin{pmatrix}
    H_{k}(\alpha) - z & R_- \\ R_+ & 0
\end{pmatrix} \colon H^1_0(\CC; \CC^{4n}) \oplus \CC^2 \to L^2_0(\CC; \CC^{4n}) \oplus \CC^2, \end{equation} with
\[R_- \colon (c_1,c_2)^T \mapsto c_1 \varphi + c_2 \psi,\quad R_+ \colon u \mapsto (\langle u, \varphi \rangle, \langle u, \psi \rangle)^T.\]
For $k=0$ the Grushin problem $\mathcal{P}_0$ in \eqref{eq:Grushin1} is invertible for $z$ near $0$ by discreteness of the spectrum of $H(\alpha)\rvert_{L^2_0}$, with
\[ \mathcal{P}^{-1}_0 = \begin{pmatrix}
    E & E_+\\E_- & E_{-+}
\end{pmatrix},\]
where 
 \begin{equation}
 \label{eq:E}
     E=(\Pi^\perp (H_0(\alpha)-z)\Pi^\perp)^{-1}\Pi^\perp,\quad E_+=R_-,\quad E_-=R_+,\quad E_{-+}= \diag (z,z) .
 \end{equation}
Here, $\Pi=R_-R_+$ is the projection to $\ker_{L^2_0} H_0(\alpha)$ and $\Pi^{\perp}=I-\Pi$. 
By \cite[Proposition C.3]{dyatlov2019mathematical}, the operator $\mathcal P_k$ in \eqref{eq:Grushin1} is invertible for $k$ sufficiently close to $0$ with inverse
 \begin{equation}
     \label{eq:Pkinv}
     \mathcal P_k^{-1} = \begin{pmatrix} F & F_+ \\ F_- & F_{-+} \end{pmatrix} \colon \  L^2_0(\CC; \CC^{4n})\oplus \CC^2 \to H^1_0(\CC; \CC^{4n}) \oplus \CC^2.
 \end{equation}
where  
\begin{equation}
\label{eq:Grushin-asymp}
    F_{-+} = E_{-+} + \sum_{m = 1}^\infty (-1)^mE_-A(EA)^{m - 1}E_+, \quad A = \begin{pmatrix}
    & \overline{k}\\ k
\end{pmatrix}.
\end{equation} 
We compute relevant terms in the above expansion at $z = 0$. 
\begin{lemm}
\label{lem:hot}
    For $z = 0$, we have $E_-A(EA)^{m - 1}E_+ = 0$ for $ 1 \leq m \le n - 1$, and \[E_-A(EA)^{n - 1}E_+ = \begin{pmatrix}
    0 & \Bar{k}^n\langle v_n, u_1\rangle\\  k^n\langle u_n, v_1\rangle & 0
\end{pmatrix}.\]
\end{lemm}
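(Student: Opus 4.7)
The plan is to evaluate each summand in \eqref{eq:Grushin-asymp} at $z = 0$ inductively, exploiting the Jordan chain structure of Theorem~\ref{thm:jordan}. Since $\alpha \notin \mathcal{A} \cup \mathcal{B}$, that theorem provides a chain $u_1, \dots, u_n \in L^2_0$ with $\ker_{L^2_0} D_n(\alpha) = \CC u_1$ and $D_n(\alpha) u_{m+1} = u_m$. The same argument applied to $D_n(\alpha)^*$, whose $L^2_0$-kernel is one-dimensional by Corollary~\ref{cor:dim}, yields a dual chain $v_1, \dots, v_n$ with $D_n(\alpha)^* v_{m+1} = v_m$. I normalize $\|u_1\| = \|v_1\| = 1$ and replace $u_m, v_m$ for $m \ge 2$ by their Gram--Schmidt orthogonalizations against $u_1$ and $v_1$ respectively; the chain relations are preserved since the subtracted kernel elements vanish under $D_n(\alpha)$ and $D_n(\alpha)^*$. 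At $z=0$ one also has $E_+ = R_-$ and $E_- = R_+$ directly from the Grushin identities, since $E_{-+}(0) = 0$ and $\ker H_0(\alpha) = \mathrm{span}\{\varphi, \psi\}$.

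The central inductive claim is that for $1 \le m \le n$,
\[
(EA)^{m-1} E_+ \begin{pmatrix} 1 \\ 0 \end{pmatrix} = \begin{pmatrix} k^{m-1} u_m \\ 0 \end{pmatrix}, \qquad (EA)^{m-1} E_+ \begin{pmatrix} 0 \\ 1 \end{pmatrix} = \begin{pmatrix} 0 \\ \bar k^{m-1} v_m \end{pmatrix}.
\]
The base case is $E_+ = R_-$. For the inductive step I apply $A$, obtaining $A\bigl( k^{m-1} u_m, 0 \bigr)^{T} = \bigl( 0, k^m u_m \bigr)^{T}$. Lemma~\ref{lem:dichotomy}(1) with $k+\ell = 1+m \le n$ gives $\langle u_m, v_1 \rangle = 0$, so this vector already lies in $\operatorname{ran} \Pi^\perp$. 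Because $H_0(\alpha)$ is block off-diagonal, solving $H_0(\alpha) w = (0, k^m u_m)^T$ with $w \in \operatorname{ran} \Pi^\perp$ reduces to $D_n(\alpha) w_1 = k^m u_m$, solved by $w_1 = k^m u_{m+1}$ via the chain, together with $D_n(\alpha)^* w_2 = 0$ and $w_2 \perp v_1$, which forces $w_2 = 0$. The second column is handled symmetrically using the dual chain.

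Finally, applying $A$ once more and pairing by $E_- = R_+$ yields
\[
E_- A (EA)^{m-1} E_+ = \begin{pmatrix} 0 & \bar k^m \langle v_m, u_1 \rangle \\ k^m \langle u_m, v_1 \rangle & 0 \end{pmatrix}.
\]
For $1 \le m \le n-1$, Lemma~\ref{lem:dichotomy}(1) with $k+\ell = 1+m \le n$ kills both $\langle v_m, u_1 \rangle$ and $\langle u_m, v_1 \rangle$, establishing vanishing. For $m = n$ the chain terminates and these are the only surviving contributions, producing the asserted matrix.

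The main subtlety is controlling the Gram--Schmidt corrections through the induction. Any correction at step $m$ is a scalar multiple of $u_1$ (or $v_1$); such a term is annihilated by the next application of $D_n(\alpha)$ (resp.\ $D_n(\alpha)^*$) and on the final $E_-$-pairing contributes only multiples of $\langle u_1, v_1 \rangle$, which in turn vanishes by Lemma~\ref{lem:dichotomy}(1) (using $n \ge 2$). A secondary prerequisite is the existence of the dual chain $\{v_m\}$; this follows either by rerunning the proof of Theorem~\ref{thm:jordan} with $D_n(\alpha)$ replaced by $D_n(\alpha)^*$ or, more intrinsically, from the equality of the generalized-kernel dimensions of $D_n(\alpha)|_{L^2_0}$ and its adjoint at the eigenvalue $0$, which is $n = \rank \Pi(\alpha)$.
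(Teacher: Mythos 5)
Your proposal is correct and follows essentially the same route as the paper: the same induction showing $(EA)^{m-1}E_+e_1 = k^{m-1}(u_m,0)^T$ and $(EA)^{m-1}E_+e_2 = \bar k^{m-1}(0,v_m)^T$ via the Jordan chains of Theorem~\ref{thm:jordan}, followed by the same application of Lemma~\ref{lem:dichotomy} to kill the inner products for $m\le n-1$ and to identify the surviving off-diagonal entries at $m=n$. Your extra care about the existence of the dual chain for $D_n(\alpha)^*$ and about orthogonalizing the chain against $u_1$, $v_1$ (so that the iterates genuinely land in $\operatorname{ran}\Pi^\perp$) fills in details the paper leaves implicit, but does not change the argument.
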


\begin{proof}
    By Theorem \ref{thm:main}, we have for $j = 2, \dots, n,$ \begin{equation}
        H_0(\alpha)\begin{pmatrix}
            u_j \\ 0
        \end{pmatrix} = \begin{pmatrix}
            0 \\ u_{j - 1}
        \end{pmatrix},\quad H_0(\alpha) \begin{pmatrix}
            0 \\ v_j
        \end{pmatrix} = \begin{pmatrix}
            v_{j - 1} \\ 0
        \end{pmatrix}. 
    \end{equation} Since Lemma \ref{lem:dichotomy} implies that $(0, u_{j - 1})^T \perp \psi$ and $(v_{j - 1}, 0)^T\perp \varphi$, using  \eqref{eq:E} we conclude that $E: (0, u_{j - 1})^T \mapsto (u_j, 0)^T$ and $E: (v_{j - 1}, 0)^T \mapsto (0, v_j)^T$. 
    Inductively, we have 
    \begin{gather*}
        (EA)^{m - 1}E_+e_1 = (EA)^{m - 1}(u_1, 0)^T = k^{m - 1}(u_{m }, 0)^T,\\
        (EA)^{m - 1}E_+e_2 = \bar k^{m - 1}(0, v_m)^T.
    \end{gather*}
    This gives us that for $m \le n$, 
    \[E_-A(EA)^{m - 1}E_+ = \begin{pmatrix}
        k^m\langle (0, u_m), (u_1, 0)\rangle & \bar k^m\langle (v_m, 0), (u_1, 0)\rangle \\k^m\langle (0, u_m), (0, v_1)\rangle & \bar k^m\langle (v_m, 0), (0, v_1)\rangle
    \end{pmatrix}.\] 
    It is easy to see that the diagonal terms vanish, and the inner products in the off-diagonal terms vanish for $1 \le m \le n - 1$ due to Lemma \ref{lem:dichotomy}.
\end{proof}

Note that all components of $\mathcal P_k^{-1}$ are holomorphic in $z$. We conclude that 
\[  F_{-+} = \begin{pmatrix}
    z & \Bar{k}^n\langle v_n, u_1\rangle\\  k^n\langle u_n, v_1 \rangle & z
\end{pmatrix} + \begin{pmatrix}
    0 & {\mathcal{O}(|k|^{n+1})}\\ {\mathcal{O}(|k|^{n+1})} & 0
\end{pmatrix}  +  z R(k,z),\]
where $R(k,z) = \mathcal{O}(|k|)$ is smooth in $(k,\Bar{k})$-variables and holomorphic in $z$. The first term comes from Lemma \ref{lem:hot} and the second term comes from the higher order terms in $F_{-+}$ at $z=0$. Hence, $F_{-+}$ non-invertible implies  
\begin{equation}
\label{eq:det-grushin}
    z  = \pm |\langle u_{n}, v_1 \rangle| \cdot |k|^n +  {\mathcal{O}(|k|^{n+1})} + z R'(k,z), \quad R'(k,z) = \mathcal{O}(|k|).
\end{equation}
This yields that near $k=0$, zeros of $\det F_{-+}$ are given by
\begin{equation}
    z(k) = \pm |\langle u_{n}, v_1 \rangle| \cdot |k|^n + \mathcal{O}(|k|^{n+1})
\end{equation}
which gives the band structure of $H(\alpha)$ near $k=0$ by \cite[Lemma~2.10]{zworskipde}. In addition, for $\alpha \notin \mathcal A \cup \mathcal B$, we have $\langle u_n, v_1\rangle \neq 0$ by Lemma \ref{lem:dichotomy}. This proves part \eqref{part1} of Theorem \ref{thm:main}.



\section{Dirac cones and degenerated band crossing} 
\label{sec:cone}

In this section, we perform the analysis for $\alpha \in \mathcal B$ and prove part \eqref{part3} of Theorem \ref{thm:main}. Recall that by Corollary \ref{cor:dim}, we now have $\dim \ker_{L^2_0} D_{n}(\alpha) = \dim \ker_{L^2_0} D_{n}(\alpha)^* = 2$, which implies that $E_{\pm 1}(\alpha; 0) = E_{\pm 2}(\alpha; 0) = 0$ is a degenerated eigenvalue of multiplicity four.  
By the proof of Theorem \ref{thm:jordan}, there exists a set of orthonormal functions $\{u', u_1, \dots, u_{n-1}\}$ and a set of orthonormal functions $\{v', v_1, \dots, v_{n-1}\}$ such that 
\begin{gather*}
    \operatorname{span}_{\CC} \{u', u_1\} = \ker_{L^2_0} D(\alpha), \quad D(\alpha)u_i = t u_{i-1},\ i = 2, \cdots, n-1;\\
    \operatorname{span}_{\CC} \{v', v_1\} = \ker_{L^2_0} D(\alpha)^*, \quad D(\alpha)^*v_i = t v_{i-1},\ i = 2, \cdots, n-1.
\end{gather*}

We have an analogy of Lemma \ref{lem:dichotomy}. 
\begin{lemm}\label{lem:dichotomy-2}
    In the above notations, for $\alpha \in \mathcal B$, we have
    \begin{enumerate}
        \item $\langle u_i, v_j\rangle = 0$ for $i + j \leq n - 1$. Moreover, for $i < n - 1$ we have $\langle u_i, v'\rangle = \langle u', v_i\rangle = 0.$
        \item $\langle u_{n - 1}, v_1\rangle = \langle u_1, v_{n - 1}\rangle \ne 0$.
        \item $\langle u', v'\rangle \ne 0$. Moreover, $u'$ and $v'$ can be chosen such that $\langle u', v_{n - 1} \rangle = \langle u_{n - 1}, v'\rangle = 0$.
    \end{enumerate}
\end{lemm}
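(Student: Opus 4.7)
My plan is to extend the strategy of Lemma \ref{lem:dichotomy} to the two-block Jordan case: use a Jordan-chain shifting identity for the orthogonalities in Part (1), and invoke non-degeneracy of the $L^2$-pairing on the $n$-dimensional generalized kernels of $D_n(\alpha)$ and $D_n(\alpha)^*$ for the non-vanishings in Parts (2) and (3).

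For Part (1), the basic shifting identity is
\[
\langle u_i, v_j\rangle \;=\; \tfrac{1}{t}\langle D u_{i+1}, v_j\rangle \;=\; \tfrac{1}{t}\langle u_{i+1}, D^* v_j\rangle \;=\; \langle u_{i+1}, v_{j-1}\rangle,
\]
valid when $i+1 \leq n-1$ and $j \geq 2$ (absorbing chain constants into $t$). Iterating, $\langle u_i, v_j\rangle = \langle u_{i+j-1}, v_1\rangle$ whenever $i+j \leq n$; for $i+j \leq n-1$ the right-hand side vanishes because $u_{i+j-1} = Du_{i+j}/t \in \operatorname{ran} D$ while $v_1 \in \ker D^* = (\operatorname{ran} D)^\perp$. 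The pairings $\langle u', v_j\rangle$ and $\langle u_i, v'\rangle$ for indices up to $n-2$ vanish by the same argument, now using $Du'=0$, $D^*v'=0$ together with $v_j \in \operatorname{ran} D^*$ and $u_i \in \operatorname{ran} D$ for those indices.

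For Part (2), the equality $\langle u_{n-1}, v_1\rangle = \langle u_1, v_{n-1}\rangle$ follows by iterating the shifting identity along $i+j=n$, where all indices remain admissible. For non-vanishing, introduce the rank-$n$ spectral projector
\[
\Pi = -\tfrac{1}{2\pi i}\oint_{|\zeta|=r}(D_n(\alpha) - \zeta)^{-1}\,d\zeta
\]
(well-defined since $0$ is isolated in $\Spec_{L^2_0} D_n(\alpha)$ for $\alpha \notin \mathcal{A}$), with $\operatorname{ran}\Pi = \operatorname{span}\{u_1,\ldots,u_{n-1}, u'\}$ and dually $\operatorname{ran}\Pi^* = \operatorname{span}\{v_1,\ldots,v_{n-1}, v'\}$. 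Since $\ker\Pi = (\operatorname{ran}\Pi^*)^\perp$ and $L^2_0 = \operatorname{ran}\Pi \oplus \ker\Pi$, the $L^2$-pairing is non-degenerate on $\operatorname{ran}\Pi \times \operatorname{ran}\Pi^*$, so the $n \times n$ pairing matrix $M$ in the two bases is invertible. By Part (1), the top-left $(n-1)\times(n-1)$ block of $M$ is Hankel-like with zeros below the anti-diagonal and anti-diagonal entries all equal to $a := \langle u_{n-1}, v_1\rangle$; the last row and column have support only at the corner adjacent to the anti-diagonal; and $M_{nn} = \langle u', v'\rangle$. Iterated cofactor expansion along the first column (where only $a$ survives) collapses the determinant to $\det M = \pm a^{n-1} \langle u', v'\rangle$, forcing both $a \neq 0$ and $\langle u', v'\rangle \neq 0$.

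For Part (3), non-vanishing of $\langle u', v'\rangle$ is already part of the determinant identity. To arrange $\langle u', v_{n-1}\rangle = \langle u_{n-1}, v'\rangle = 0$, replace
\[
u' \;\mapsto\; u' - \tfrac{\langle u', v_{n-1}\rangle}{a}\, u_1, \qquad v' \;\mapsto\; v' - \tfrac{\langle u_{n-1}, v'\rangle}{a}\, v_1;
\]
these shifts are well-defined since $a \neq 0$, preserve the other Part (1) orthogonalities, and preserve the value of $\langle u', v'\rangle$ because the correction terms $\langle u_1, v'\rangle$ and $\langle u', v_1\rangle$ vanish by Part (1) when $n \geq 3$. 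The case $n = 2$ requires separate bookkeeping since Part (1) is vacuous, but the same non-degeneracy of the $2 \times 2$ pairing matrix on $\ker D \times \ker D^*$ plus a Gaussian elimination on the two kernels yields all three parts simultaneously. The main obstacle is the determinant identification in Part (2): one must verify that the combined Hankel pattern of the top-left block and the corner-support pattern of the outer row/column collapse $\det M$ to exactly $\pm a^{n-1}\langle u', v'\rangle$, simultaneously isolating the two scalars the lemma asserts to be nonzero.
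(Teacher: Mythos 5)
Your proof is correct for $n\ge 3$ and follows the same skeleton as the paper's (which simply declares parts (1)--(2) ``analogous to Lemma~\ref{lem:dichotomy}'' and then performs the same renormalization of $u',v'$ for part (3)), but you package the non-vanishing statements differently, and the repackaging is worth comparing. The paper proves $\langle u_1,v_{n-1}\rangle\ne 0$ by applying the projector identity \eqref{eq:proj} to the single vector $u_1$, and only gets $\langle u',v'\rangle\ne 0$ \emph{after} the renormalization, by applying \eqref{eq:proj} to $u'$. You instead encode the same non-degeneracy fact ($L^2_0=\operatorname{ran}\Pi\oplus\ker\Pi$ with $\ker\Pi=(\operatorname{ran}\Pi^*)^\perp$) as invertibility of the full $n\times n$ pairing matrix $M$ between $\{u_1,\dots,u_{n-1},u'\}$ and $\{v_1,\dots,v_{n-1},v'\}$, and extract both $\langle u_{n-1},v_1\rangle\ne 0$ and $\langle u',v'\rangle\ne 0$ from the single identity $\det M=\pm a^{n-1}\langle u',v'\rangle$. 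I checked this identity: using part (1) the last row and column of $M$ are $(0,\dots,0,\langle u',v_{n-1}\rangle,\langle u',v'\rangle)$ and its transpose analogue, and Laplace expansion along the last row kills the $\langle u',v_{n-1}\rangle$-term (its complementary minor contains a strictly anti-triangular $(n-2)\times(n-2)$ block with vanishing anti-diagonal), leaving $\pm\langle u',v'\rangle$ times the anti-triangular determinant $\pm a^{n-1}$. Two cautions: first, your ``iterated cofactor expansion along the first column'' needs care after the first deletion (already for $n=3$ the surviving column $v_{n-1}$ meets the entry $\langle u',v_{n-1}\rangle$, which need not vanish); expanding along the last row or the first row is cleaner, but the final value is right. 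Second, the identity uses $\langle u_1,v'\rangle=\langle u',v_1\rangle=0$, which comes from part (1) only when $n\ge 3$.

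The one genuine gap is the case $n=2$, which you flag but do not close. There part (1) is vacuous, and non-degeneracy of the $2\times 2$ pairing matrix alone does \emph{not} force $\langle u_1,v_1\rangle\ne 0$: an anti-diagonal non-degenerate pairing is perfectly possible, so ``Gaussian elimination'' cannot produce the conclusion without additional input (e.g.\ the $\mathscr C_{2n}$-symmetry types of $u_1,u',v_1,v'$, or an argument pinning down which kernel element of $D_n(\alpha)^*$ is the bottom of the chain). To be fair, the paper's one-line proof of (1)--(2) has exactly the same lacuna at $n=2$, since its projector argument applied to $u_1$ also relies on part (1) to eliminate $\langle u_1,v'\rangle$; but since $n=2$ is precisely the two-Dirac-cone case highlighted in Figure~\ref{fig:cones}, you should either supply the missing symmetry bookkeeping or restrict your determinant argument explicitly to $n\ge 3$ and treat $n=2$ separately and completely.
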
 

\begin{proof}
    The proof of (1) and (2) is analogous to the proof of Lemma \ref{lem:dichotomy}, noting that $\alpha \in \mathcal B$ implies that $D(\alpha)-z$ is invertible in some punctured disk around $0$. To prove (3), we first choose $u'$ and $v'$ such that the inner products $\langle u', v_{n - 1}\rangle$ and $\langle u_{n - 1}, v'\rangle$ vanish. To do this, we simply redefine $u'$ and $v'$ to be 
    \[\Tilde{u}' := u' - \frac{\langle u', v_{n - 1}\rangle}{\langle u_1, v_{n - 1}\rangle}u_1, \quad \Tilde{v}' := v' - \frac{\langle u_{n - 1}, v'\rangle}{\langle u_{n - 1}, v_1\rangle}v_1, \] 
    such that the desired inner products vanish under this redefinition. As  $\Pi u' = u'$ and $\langle u', v_{i}\rangle = 0$ for $1\leq i \leq n-1$, we see that $\langle u', v'\rangle$ cannot vanish by equation \eqref{eq:proj}.
\end{proof}

Now, we consider two separate Grushin problems. Define $\varphi = (u_1, 0)^T, \varphi' = (u', 0)^T, \psi = (0, v_1)^T, \psi' = (0, v')^T \in \ker_{L^2_0} H(\alpha)$. We make the following definitions for $R_-, R_-' \colon \CC^2 \to L^2_0(\CC; \CC^{4n})$ and $R_+, R_+' \colon L^2_0(\CC, \CC^{4n}) \to \CC^2$:
\begin{gather*}
    R_- \colon (z_1, z_2)^T \mapsto z_1\varphi + z_2\psi,\quad  R_-' \colon (w_1, w_2)^T \mapsto w_1\varphi' + w_2\psi',\\
    R_+ \colon u \mapsto (\langle u, \varphi\rangle, \langle u, \psi\rangle)^T,\quad R_+' \colon u \mapsto (\langle u, \varphi'\rangle, \langle u, \psi'\rangle)^T.
\end{gather*}
Now, consider the iterative Grushin problems
\begin{equation*}
    \mathcal P_k = \begin{pmatrix}
        \begin{bmatrix}
            H_k(\alpha) - z & R_-' \\ 
            R_+'
        \end{bmatrix} & \begin{bmatrix}
            R_- \\ 0
        \end{bmatrix} \\ \begin{bmatrix}
           \ \quad R_+ \ \quad  & \ \  0 \ \  
        \end{bmatrix}
    \end{pmatrix} \colon (L^2_0(\CC; \CC^{4n}) \oplus \CC^2) \oplus \CC^2 \to (L^2_0(\CC; \CC^{4n}) \oplus \CC^2) \oplus \CC^2
\end{equation*}
and
\begin{equation*}
    \mathcal P_k' = \begin{pmatrix}
        \begin{bmatrix}
            H_k(\alpha) - z & R_- \\ 
            R_+
        \end{bmatrix} & \begin{bmatrix}
            R_-' \\ 0
        \end{bmatrix} \\ \begin{bmatrix}
             \ \quad R'_+ \ \quad  & \ \  0 \ \ 
        \end{bmatrix}
    \end{pmatrix} \colon (L^2_0(\CC; \CC^{4n}) \oplus \CC^2) \oplus \CC^2 \to (L^2_0(\CC; \CC^{4n}) \oplus \CC^2) \oplus \CC^2.
\end{equation*}

Now, for $k = 0$, both $\mathcal P_0$ and $\mathcal P_0'$ are invertible, and we may denote the corresponding inverses by \begin{equation}
    \mathcal P_0^{-1} = \begin{pmatrix}
        E & E_+ \\ E_- & E_{-+}
    \end{pmatrix}, \quad (\mathcal P_0')^{-1} = \begin{pmatrix}
        E' & E_+' \\ E_-' & E_{-+}'
    \end{pmatrix}
\end{equation}
Similarly to the previous section, it is easy to check that 
\begin{gather*}
    E_-= R_+, \quad E_+ = R_-, \quad E_-' = R_+' , \quad E_+' = R_-', \quad E_{-+} = E_{-+}' = \diag(z, z),\\
    E = \left(\Pi^\bot \begin{bmatrix}
        H_k(\alpha) - z & R_-' \\ R_+'
    \end{bmatrix}\Pi^\bot\right)^{-1}\Pi^\bot, \quad E' = \left((\Pi')^\bot \begin{bmatrix}
        H_k(\alpha) - z & R_- \\ R_+
    \end{bmatrix}(\Pi')^\bot\right)^{-1}(\Pi')^\bot,
\end{gather*}
where $\Pi = R_-R_+$ is the projection to $\operatorname{span}_{\CC}\{\varphi, \psi\}$, $\Pi' = R_-'R_+'$ is the projection to $\operatorname{span}_{\CC}\{\varphi', \psi'\}$, and $\Pi^\bot = I - \Pi,\ (\Pi')^\bot = I - \Pi'$. Note that $E, E'$ are well-defined this way by Lemma \ref{lem:dichotomy-2}. 

For $k$ sufficiently small, both $\mathcal P_k$ and $\mathcal P_k'$ are invertible, with the inverses \[\mathcal P_k^{-1} = \begin{pmatrix}
    F & F_+\\ F_- & F_{-+} 
\end{pmatrix}, \quad (\mathcal P_k')^{-1} = \begin{pmatrix}
    F & F_+' \\ F_-' & F_{-+'}
\end{pmatrix},\] where we have that $F_+ = E_+, F_- = E_-, F_+' = E_+', F_-' = E_-'$ by \cite[Appendix C.3]{dyatlov2019mathematical}. We also obtain the series expansions 
\begin{gather*}
     F_{-+} = E_{-+} + \sum_{m = 1}^\infty (-1)^mE_-A(EA)^{m - 1}E_+,\quad 
      F_{-+}' = E_{-+}' + \sum_{m = 1}^\infty (-1)^mE_-'A(E'A)^{m - 1}E_+',
\end{gather*}
with $A$ in \eqref{eq:Grushin-asymp}. As Lemma \ref{lem:dichotomy-2} indicates that that $(0, u_i)^T ,(v_i, 0)^T \in \ker \Pi'$ for $1 \le i \le n - 1$, we may seamlessly apply the same computation as in Lemma \ref{lem:hot} to conclude that \begin{equation}
    F_{-+} = \begin{pmatrix}
        z & \bar k^{n - 1} \langle v_{n -1}, u_1\rangle \\ k^{n - 1}\langle u_{n - 1}, v_1\rangle & z
    \end{pmatrix} + \begin{pmatrix}
        0 & \mathcal O(|k|^n) \\ \mathcal O(|k|^n) & 0
    \end{pmatrix} + zR(k, z),
\end{equation} where $R = \mathcal O(|k|)$. This indicates that $F_{-+}$ is singular when $z = \pm|\langle u_{n - 1}, v_1\rangle||k|^{n - 1} + \mathcal O(|k|^{n})$ for sufficiently small $k$. Similarly, because $(0, u')^T, (v', 0)^T \in \ker \Pi$, we have \[F_{-+}' = \begin{pmatrix}
    z & \bar k\langle v', u'\rangle \\ k\langle u', v'\rangle & z
\end{pmatrix} + \begin{pmatrix} 0
    & \mathcal O(|k|^2)\\ \mathcal O(|k|^2) & 0
\end{pmatrix},\] which implies that $F_{-+}'$ is singular when $z = \pm|\langle u', v'\rangle ||k| + \mathcal O(|k|^2)$. 
Note that $\det F_{-+} = 0$ implies that \[\mathcal H_k := \begin{bmatrix}
    H_k(\alpha) - z & R_-' \\ R_+'
\end{bmatrix}\colon L^2_0(\CC; \CC^{4n}) \oplus \CC^2 \to L_0^2(\CC; \CC^{4n}) \oplus \CC^2\] is non-invertible. Hence $H_k(\alpha) - z$ is not invertible, as if $H_k(\alpha) - z$ were invertible, then so would $\mathcal H_k$. The same argument holds for $F_{-+}'$. We obtain the eigenvalues of $H_k(\alpha)$ are 
\[E_{\pm 1}(k) = \pm|\langle u_{n - 1}, v_1\rangle||k|^{n - 1} + \mathcal O(|k|^{n}), \quad E_{\pm 2}(k) = \pm|\langle u', v'\rangle ||k| + \mathcal O(|k|^2)\] 
for $k$ sufficiently small. Lemma \ref{lem:dichotomy-2} implies the leading order coefficients are nonzero.

\bibliography{reference}

\newcommand{\etalchar}[1]{$^{#1}$}
\begin{thebibliography}{WKML23}

\bibitem[BC18]{berkolaiko2018symmetry}
Gregory Berkolaiko and Andrew Comech.
\newblock Symmetry and {D}irac points in graphene spectrum.
\newblock {\em Journal of Spectral Theory}, 8(3):1099--1147, 2018.

\bibitem[BEWZ21]{becker2021spectral}
Simon Becker, Mark Embree, Jens Wittsten, and Maciej Zworski.
\newblock Spectral characterization of magic angles in twisted bilayer
  graphene.
\newblock {\em Physical Review B}, 103(16):165113, 2021.

\bibitem[BEWZ22]{becker2020mathematics}
Simon Becker, Mark Embree, Jens Wittsten, and Maciej Zworski.
\newblock Mathematics of magic angles in a model of twisted bilayer graphene.
\newblock {\em Probability and Mathematical Physics}, 3(1):69--103, 2022.

\bibitem[BHWY25]{becker2025chiral}
Simon Becker, Tristan Humbert, Jens Wittsten, and Mengxuan Yang.
\newblock Chiral limit of twisted trilayer graphene.
\newblock {\em Nonlinearity}, 38(5):055008, 2025.

\bibitem[BHZ23a]{becker2023degenerate}
Simon Becker, Tristan Humbert, and Maciej Zworski.
\newblock Degenerate flat bands in twisted bilayer graphene.
\newblock {\em arXiv preprint arXiv:2306.02909}, 2023.

\bibitem[BHZ23b]{becker2022integrability}
Simon Becker, Tristan Humbert, and Maciej Zworski.
\newblock Integrability in the chiral model of magic angles.
\newblock {\em Communications in Mathematical Physics}, 403(2):1153--1169,
  2023.

\bibitem[BHZ24]{becker2022fine}
Simon Becker, Tristan Humbert, and Maciej Zworski.
\newblock Fine structure of flat bands in a chiral model of magic angles.
\newblock In {\em Annales Henri Poincar{\'e}}, pages 1--31. Springer, 2024.

\bibitem[BKZ24]{zhu2024magnetic}
Simon Becker, Jihoi Kim, and Xiaowen Zhu.
\newblock Magnetic response properties of twisted bilayer graphene.
\newblock In {\em Annales Henri Poincar{\'e}}, pages 1--46. Springer, 2024.

\bibitem[BLS25]{becker2025exact}
Simon Becker, Lin Lin, and Kevin Stubbs.
\newblock Exact ground state of interacting electrons in magic angle graphene.
\newblock {\em Communications in Mathematical Physics}, 406(6):148, 2025.

\bibitem[BM11]{bistritzer2011moire}
Rafi Bistritzer and Allan MacDonald.
\newblock Moir{\'e} bands in twisted double-layer graphene.
\newblock {\em Proceedings of the National Academy of Sciences},
  108(30):12233--12237, 2011.

\bibitem[BOV23]{becker2023magic}
Simon Becker, Izak Oltman, and Martin Vogel.
\newblock Magic angle (in) stability and mobility edges in disordered chern
  insulators.
\newblock {\em arXiv preprint arXiv:2309.02701}, 2023.

\bibitem[BOV24]{becker2024absence}
Simon Becker, Izak Oltman, and Martin Vogel.
\newblock Absence of small magic angles for disordered tunneling potentials in
  twisted bilayer graphene.
\newblock {\em arXiv preprint arXiv:2402.12799}, 2024.

\bibitem[BQT{\etalchar{+}}24]{becker2024dirac}
Simon Becker, Solomon Quinn, Zhongkai Tao, Alexander Watson, and Mengxuan Yang.
\newblock Dirac cones and magic angles in the {B}istritzer--{M}acdonald {TBG}
  {H}amiltonian.
\newblock {\em arXiv preprint arXiv:2407.06316}, 2024.

\bibitem[BZ24a]{becker2023dirac}
Simon Becker and Maciej Zworski.
\newblock Dirac points for twisted bilayer graphene with in-plane magnetic
  field.
\newblock {\em Journal of Spectral Theory}, 14(2):479--511, 2024.

\bibitem[BZ24b]{becker2024chiral}
Simon Becker and Maciej Zworski.
\newblock From the chiral model of {TBG} to the {B}istritzer--{M}acdonald
  model.
\newblock {\em Journal of Mathematical Physics}, 65(6), 2024.

\bibitem[BZ25]{zhu2025spectral}
Simon Becker and Xiaowen Zhu.
\newblock Spectral theory of twisted bilayer graphene in a magnetic field.
\newblock {\em SIAM Journal on Mathematical Analysis}, 57(2):1621--1651, 2025.

\bibitem[CFD{\etalchar{+}}18]{cao2018correlated}
Yuan Cao, Valla Fatemi, Ahmet Demir, Shiang Fang, Spencer Tomarken, Jason Luo,
  Javier Sanchez-Yamagishi, Kenji Watanabe, Takashi Taniguchi, Efthimios
  Kaxiras, et~al.
\newblock Correlated insulator behaviour at half-filling in magic-angle
  graphene superlattices.
\newblock {\em Nature}, 556(7699):80--84, 2018.

\bibitem[CFF{\etalchar{+}}18]{cao2018unconventional}
Yuan Cao, Valla Fatemi, Shiang Fang, Kenji Watanabe, Takashi Taniguchi,
  Efthimios Kaxiras, and Pablo Jarillo-Herrero.
\newblock Unconventional superconductivity in magic-angle graphene
  superlattices.
\newblock {\em Nature}, 556(7699):43--50, 2018.

\bibitem[CGG23]{PhysRevB.107.155403}
\'Eric Canc\`es, Louis Garrigue, and David Gontier.
\newblock Simple derivation of moir\'e-scale continuous models for twisted
  bilayer graphene.
\newblock {\em Phys. Rev. B}, 107:155403, Apr 2023.

\bibitem[CM23]{cances2023semiclassical}
\'Eric Canc{\`e}s and Long Meng.
\newblock Semiclassical analysis of two-scale electronic {H}amiltonians for
  twisted bilayer graphene.
\newblock {\em arXiv preprint arXiv:2311.14011}, 2023.

\bibitem[DZ19]{dyatlov2019mathematical}
Semyon Dyatlov and Maciej Zworski.
\newblock {\em Mathematical theory of scattering resonances}, volume 200.
\newblock American Mathematical Soc., 2019.

\bibitem[DZZ25a]{zworski2025scalar}
Semyon Dyatlov, Henry Zeng, and Maciej Zworski.
\newblock Private communications.
\newblock 2025.

\bibitem[DZZ25b]{dyatlov2025scalar}
Semyon Dyatlov, Henry Zeng, and Maciej Zworski.
\newblock {WKB} structure in a scalar model of flat bands.
\newblock {\em preprint}, 2025.

\bibitem[FW12]{fefferman2012honeycomb}
Charles Fefferman and Michael Weinstein.
\newblock Honeycomb lattice potentials and {D}irac points.
\newblock {\em Journal of the American Mathematical Society}, 25(4):1169--1220,
  2012.

\bibitem[GZ23]{galkowski2023abstract}
Jeffrey Galkowski and Maciej Zworski.
\newblock An abstract formulation of the flat band condition.
\newblock {\em arXiv preprint arXiv:2307.04896}, 2023.

\bibitem[HZ25]{hitrik2025classically}
Michael Hitrik and Maciej Zworski.
\newblock Classically forbidden regions in the chiral model of twisted bilayer
  graphene.
\newblock {\em Probability and Mathematical Physics}, 6(2):505--546, 2025.

\bibitem[LTKV20]{ledwith2020fractional}
Patrick Ledwith, Grigory Tarnopolsky, Eslam Khalaf, and Ashvin Vishwanath.
\newblock Fractional chern insulator states in twisted bilayer graphene: An
  analytical approach.
\newblock {\em Physical Review Research}, 2(2):023237, 2020.

\bibitem[LVK22]{ledwith2022family}
Patrick Ledwith, Ashvin Vishwanath, and Eslam Khalaf.
\newblock Family of ideal {C}hern flatbands with arbitrary {C}hern number in
  {C}hiral twisted graphene multilayers.
\newblock {\em Physical Review Letters}, 128(17):176404, 2022.

\bibitem[PCW{\etalchar{+}}21]{park2021tunable}
Jeong~Min Park, Yuan Cao, Kenji Watanabe, Takashi Taniguchi, and Pablo
  Jarillo-Herrero.
\newblock Tunable strongly coupled superconductivity in magic-angle twisted
  trilayer graphene.
\newblock {\em Nature}, 590(7845):249--255, 2021.

\bibitem[QKLW25]{quinn2025higher}
Solomon Quinn, Tianyu Kong, Mitchell Luskin, and Alexander Watson.
\newblock Higher-order continuum models for twisted bilayer graphene.
\newblock {\em arXiv preprint arXiv:2502.08120}, 2025.

\bibitem[RB11]{PhysRevX.1.021014}
Nicolas. Regnault and Andrei Bernevig.
\newblock Fractional {C}hern insulator.
\newblock {\em Phys. Rev. X}, 1:021014, Dec 2011.

\bibitem[SBL24]{stubbs2024hartree}
Kevin Stubbs, Simon Becker, and Lin Lin.
\newblock On the {H}artree-{F}ock ground state manifold in magic angle twisted
  graphene systems.
\newblock {\em arXiv preprint arXiv:2403.19890}, 2024.

\bibitem[SRML25]{stubbs2025many}
Kevin Stubbs, Michael Ragone, Allan MacDonald, and Lin Lin.
\newblock The many-body ground state manifold of flat band interacting
  {H}amiltonian for magic angle twisted bilayer graphene.
\newblock {\em arXiv preprint arXiv:2503.20060}, 2025.

\bibitem[TKV19]{tarnopolsky2019origin}
Grigory Tarnopolsky, Alex Kruchkov, and Ashvin Vishwanath.
\newblock Origin of magic angles in twisted bilayer graphene.
\newblock {\em Physical Review Letters}, 122(10):106405, 2019.

\bibitem[TZ23]{zworskipde}
Zhongkai Tao and Maciej Zworski.
\newblock {PDE} methods in condensed matter physics.
\newblock {\em Lecture Notes}, 2023.

\bibitem[WKML23]{watson2023bistritzer}
Alexander Watson, Tianyu Kong, Allan MacDonald, and Mitchell Luskin.
\newblock Bistritzer--{M}acdonald dynamics in twisted bilayer graphene.
\newblock {\em Journal of Mathematical Physics}, 64(3), 2023.

\bibitem[WL21]{watson2021existence}
Alexander Watson and Mitchell Luskin.
\newblock Existence of the first magic angle for the chiral model of bilayer
  graphene.
\newblock {\em Journal of Mathematical Physics}, 62(9):091502, 2021.

\bibitem[WL22]{wang2022hierarchy}
Jie Wang and Zhao Liu.
\newblock Hierarchy of ideal flatbands in chiral twisted multilayer graphene
  models.
\newblock {\em Physical Review Letters}, 128(17):176403, 2022.

\bibitem[XPP{\etalchar{+}}21]{xie2021fractional}
Yonglong Xie, Andrew Pierce, Jeong~Min Park, Daniel Parker, Eslam Khalaf,
  Patrick Ledwith, Yuan Cao, Seung~Hwan Lee, Shaowen Chen, Patrick Forrester,
  et~al.
\newblock Fractional {C}hern insulators in magic-angle twisted bilayer
  graphene.
\newblock {\em Nature}, 600(7889):439--443, 2021.

\bibitem[Yan23]{yang2023flat}
Mengxuan Yang.
\newblock Flat bands and high {C}hern numbers in twisted multilayer graphene.
\newblock {\em Journal of Mathematical Physics}, 64(11), 2023.

\bibitem[Zwo24]{zworski2023survey}
Maciej Zworski.
\newblock Mathematical results on the chiral models of twisted bilayer graphene
  (with an appendix by {M}engxuan {Y}ang and {Z}hongkai {T}ao).
\newblock {\em Journal of Spectral Theory}, 14(3):1063--1107, 2024.

\end{thebibliography}
\bibliographystyle{alpha}

\end{document}